\documentclass[journal]{IEEEtran}

\usepackage{amsmath,amsfonts,amssymb}
\usepackage{epsfig}
\usepackage{latexsym}
\usepackage{mathrsfs}
\usepackage{multirow}
\usepackage{psfrag}
\usepackage{graphicx,subfigure}
\usepackage{graphicx}
\usepackage{cite}
\usepackage{colortbl}
\usepackage{booktabs}
\usepackage{blkarray}
\usepackage{stfloats}
\usepackage{arydshln}
\usepackage{extarrows}
\usepackage{algorithm}
\usepackage{algpseudocode}
\usepackage{color}
\usepackage{xcolor}
\usepackage{colortbl,booktabs}%
\usepackage{amsmath}
\usepackage{multicol,lipsum}
\usepackage{epstopdf}
\usepackage{float}
\usepackage{xcolor}
\usepackage{xpatch}
\usepackage{cuted}

\newtheorem{remark}{Remark}
\newtheorem{theorem}{Theorem}
\newtheorem{corollary}{Corollary}
\newtheorem{definition}{Definition}

\ifCLASSINFOpdf
\else
\fi
\begin{document}
%
% paper title
% Titles are generally capitalized except for words such as a, an, and, as,
% at, but, by, for, in, nor, of, on, or, the, to and up, which are usually
% not capitalized unless they are the first or last word of the title.
% Linebreaks \\ can be used within to get better formatting as desired.
% Do not put math or special symbols in the title.
\title{A Cooperative Implementation of Mesh Stability\\ in Vehicular Platoons
\thanks{This project has received funding from the National Natural Science Foundation of China grants Nos. 62073074, 62233004 and 62073076, the Jiangsu Provincial Key Lab of Networked Collective Intelligence grant BM2017002, and the European Union's Horizon 2020 research and innovation programme under the Marie Sklodowska-Curie grant agreement No 899987. {\it (Corresponding authors: Wenwu Yu and Simone Baldi).}
}}
%
%
% author names and IEEE memberships
% note positions of commas and nonbreaking spaces ( ~ ) LaTeX will not break
% a structure at a ~ so this keeps an author's name from being broken across
% two lines.
% use \thanks{} to gain access to the first footnote area
% a separate \thanks must be used for each paragraph as LaTeX2e's \thanks
% was not built to handle multiple paragraphs
%

\author{Meng Qiu,
        Di Liu,
        He Wang,
        Wenwu Yu,~\IEEEmembership{Senior Member,~IEEE}
        and~Simone Baldi,~\IEEEmembership{Senior Member,~IEEE}
        % <-this % stops a space
\thanks{M. Qiu, H. Wang, W. Yu and S. Baldi are with School of Mathematics, Frontiers Science Center for Mobile Information Communication and Security, Southeast University, Nanjing 210096, China (e-mails: \{230189527,101012806,wwyu\}@seu.edu.cn, s.baldi@tudelft.nl).}% <-this % stops a space
\thanks{D. Liu is with School of Computation, Information and Technology, Technical University of Munich (TUM), Germany, with Visual Intelligence for Transportation lab (VITA), Ecole Polytechnique Federale de Lausanne (EPFL), Switzerland, and also with School of Cyber Science and Engineering, Southeast University, Nanjing 210096, China (e-mail: di.liu@tum.de).}
% <-this % stops a space
%\thanks{He Wang is with the School of Mathematics, Southeast University, Nanjing 210096, China (e-mail:  101012806@seu.edu.cn.).}
%\thanks{Wenwu Yu is with the School of Mathematics, Southeast University, Nanjing 210096, China (e-mail: wwyu@seu.edu.cn.).}
%\thanks{S. Baldi is with School of Mathematics, Frontiers Science Center for Mobile Information Communication and Security, Southeast University, Nanjing 210096, China (e-mail: s.baldi@tudelft.nl).}
\thanks{Manuscript received xxx; revised xxx.}}

% note the % following the last \IEEEmembership and also \thanks -
% these prevent an unwanted space from occurring between the last author name
% and the end of the author line. i.e., if you had this:
%
% \author{....lastname \thanks{...} \thanks{...} }
%                     ^------------^------------^----Do not want these spaces!
%
% a space would be appended to the last name and could cause every name on that
% line to be shifted left slightly. This is one of those "LaTeX things". For
% instance, "\textbf{A} \textbf{B}" will typeset as "A B" not "AB". To get
% "AB" then you have to do: "\textbf{A}\textbf{B}"
% \thanks is no different in this regard, so shield the last } of each \thanks
% that ends a line with a % and do not let a space in before the next \thanks.
% Spaces after \IEEEmembership other than the last one are OK (and needed) as
% you are supposed to have spaces between the names. For what it is worth,
% this is a minor point as most people would not even notice if the said evil
% space somehow managed to creep in.

% The paper headers
\markboth{Journal of \LaTeX\ Class Files,~Vol.~xx, No.~x, xx~xxxx}%
{Shell \MakeLowercase{\textit{et al.}}: Bare Demo of IEEEtran.cls for IEEE Journals}
% The only time the second header will appear is for the odd numbered pages
% after the title page when using the twoside option.
%
% *** Note that you probably will NOT want to include the author's ***
% *** name in the headers of peer review papers.                   ***
% You can use \ifCLASSOPTIONpeerreview for conditional compilation here if
% you desire.

% If you want to put a publisher's ID mark on the page you can do it like
% this:
%\IEEEpubid{0000--0000/00\$00.00~\copyright~2015 IEEE}
% Remember, if you use this you must call \IEEEpubidadjcol in the second
% column for its text to clear the IEEEpubid mark.

% use for special paper notices
%\IEEEspecialpapernotice{(Invited Paper)}

% make the title area
\maketitle

% As a general rule, do not put math, special symbols or citations
% in the abstract or keywords.
\begin{abstract}
This work studies the problem of mesh stability in connected and automated vehicles. Mesh stability, also known as 2D string stability, refers to studying how disturbances propagate in vehicular platoons in both longitudinal and lateral direction. \textcolor{black}{As opposed to available decentralized results only relying on on-board sensing, the distinguishing feature of this work is a cooperative version of mesh stability, where on-board sensing is augmented by vehicle-to-vehicle communication.} This cooperative version dramatically improves the state-of-the-art decentralized performance: for longitudinal control, a new cooperative non-identical protocol (i.e. with non-identical control gains) is proposed that improves the state-of-the-art decentralized non-identical  protocol in terms of scalability of the control gains and strong notion of string stability. For lateral control, after showing that the non-identical approach is not necessary, a cooperative non-identical protocol is proposed to achieve another strong notion of string stability. Robustness of the proposed implementation against vehicle-to-vehicle communication delays \textcolor{black}{and actuation time lags} is considered. Numerical experiments, also performed with the vehicle simulator CarSim, validate the robustness and effectiveness of the proposed protocol.
\end{abstract}

% Note that keywords are not normally used for peerreview papers.
\begin{IEEEkeywords}
Vehicular platoons, string stability, mesh stability, cooperative control, communication delay.
\end{IEEEkeywords}

% For peer review papers, you can put extra information on the cover
% page as needed:
% \ifCLASSOPTIONpeerreview
% \begin{center} \bfseries EDICS Category: 3-BBND \end{center}
% \fi
%
% For peerreview papers, this IEEEtran command inserts a page break and
% creates the second title. It will be ignored for other modes.
\IEEEpeerreviewmaketitle

\section{Introduction}
% The very first letter is a 2 line initial drop letter followed
% by the rest of the first word in caps.
%
% form to use if the first word consists of a single letter:
% \IEEEPARstart{A}{demo} file is ....
%
% form to use if you need the single drop letter followed by
% normal text (unknown if ever used by the IEEE):
% \IEEEPARstart{A}{}demo file is ....
%
% Some journals put the first two words in caps:
% \IEEEPARstart{T}{his demo} file is ....
%
% Here we have the typical use of a "T" for an initial drop letter
% and "HIS" in caps to complete the first word.
\IEEEPARstart{V}{ehicular} platooning refers to connected and automated vehicles (CAVs) driving at close distance from each other. The interest in this topic has steadily increased over the years as a way to efficiently cope with the increasing traffic flow \cite{12Naus}. Several studies have shown that vehicular platooning can improve fuel-consumption, traffic intelligence and safety \cite{a1,a2,28Bonnet}. Control of vehicular platooning involves both longitudinal and lateral directions: in the longitudinal direction, each vehicle should be controlled by braking or accelerating to maintain a desired inter-vehicle gap and velocity profile. Many researches on longitudinal control of platoons have been conducted, such as Cooperative Adaptive Cruise Control (CACC) \cite{13Ploeg, lin2020, Santini2019}, sliding mode control \textcolor{black}{\cite{tnse3,10Guo}}, reinforcement and deep learning \textcolor{black}{\cite{Wu2021,tnse1}} and nonlinear control \cite{15Besselink,Yue2020}.

\textcolor{black}{In a recent review} \cite{16Feng}, different platooning control methods have been categorized according to spacing policy, information network, disturbances type, and notion of string stability. The two most common inter-vehicle spacing policies are the constant-spacing policy \cite{14SWAROOP,19White} and the velocity-dependent policy \cite{10Guo,15Besselink}. Commonly considered information networks are the predecessor-follower (PF) and k-nearest neighbours (k-NN) \cite{a5}, which can be unidirectional (look-ahead) or bidirectional (look-ahead-and-behind) \cite{Liu,17Chehardoli,a6,04Xu,new6}. Information networks often involve delayed communication \cite{Santini2019,new3,new5,new4,new2,H} and/or security aspects \cite{Muthi2022,Pir2021}. Types of disturbance include non-zero initial conditions \cite{15Besselink} and external disturbances \cite{12Naus,10Guo}. In longitudinal platooning, the concept of string instability/stability refers to the capability of a controller to amplify/reduce disturbances (e.g. traffic shockwaves) as they propagate throughout the platoon:  %disturbance reduction can be utilized to absorb traffic shockwaves S
several notions of longitudinal string stability have been proposed in the literature, spanning from Lyapunov-based to frequency-domain characterizations \cite{Liu,17Chehardoli,a6,04Xu,18Khatir,29Papadimitriou,21Omar}. %Several notions of longitudinal string stability have been proposed in the literature, spanning from Lyapunov-based to frequency-domain characterizations: an overview of these notions can be found in \cite{16Feng}.

%Since the very beginning,
The analysis of string stability is traditionally performed in the longitudinal direction \cite{16Feng}. Although it is recognized in the literature that lateral string stability comes into play in curved paths, cornering and lane change \cite{19White,29Papadimitriou,21Omar,20Solyom}, lateral string stability is still an open research subject as compared to longitudinal string stability. The definition of $l_p$ and $l_\infty$ lateral string stability with look-down technique was first proposed in \cite{29Papadimitriou}; a special $l_2$ notion of lateral string stability is in \cite{20Solyom}, with the drawback that the lateral string stability conditions of the first three vehicles differ from the vehicles from the fourth to the last. By analyzing lateral string stability in a similar way as longitudinal frequency-domain methods, \cite{21Omar} considers $H_\infty$ control by approximating the road curvature with the preceding delayed information. Unfortunately, these works on lateral control and lateral string stability all neglect the effect of longitudinal control.

Combined longitudinal and lateral control of vehicle platoons can be classified into coupled \cite{01Bayuwindra,02Bayuwindra,03Zhao,04Xu} and decoupled \cite{18Khatir,05Wei,06ZHAO,a7}. The coupled approach requires some simplification steps due to the challenges of dealing with nonlinear dynamics: feedback linearization %control
based on unicycle model
is utilized to end up with linear dynamics easier to analyze. Unfortunately, coupled approaches %either
neglect %longitudinal and
lateral string stability, %or
and at most consider string stability only in the longitudinal direction \cite{04Xu,03Zhao}. The main appeal of decoupled approaches is to make use of robust control tools such as $\mathcal{L}_2$-gain control which can tackle uncertainty better than feedback linearization.
\textcolor{black}{Overall, few works have paid attention to keeping string stability jointly in longitudinal and lateral directions, a concept going under the name of mesh stability \cite{18Khatir,32Pant}. Among these, \cite{18Khatir} is particularly interesting since it showed that decentralized identical controllers for all vehicles do not achieve longitudinal string stability. Accordingly, \cite{18Khatir} proposes decentralized non-identical control, i.e. the controller is not the same for all vehicles.}
 \textcolor{black}{Although decentralized non-identical control can attain string stability, it only uses on-board measurements to keep the inter-vehicle distance. It is desirable to study non-identical control in a cooperative setting, i.e. where on-board sensing is augmented by vehicle-to-vehicle (V2V) communication.}
%require the gains to increase linearly with respect to the number of vehicles in the platoon, which can lead to high control efforts and oscillations.

In this work, \textcolor{black}{we show that this cooperative setting  dramatically improves the state-of-the-art decentralized performance in terms of scalability of the control gains and strong notion of string stability. It is worth mentioning that the V2V setting we study is in line with wireless communication in CACC technology, i.e. some predecessor's information is made available to the follower.} %global coordinates as most decoupled longitudinal and lateral literature, we consider relative motions of adjacent vehicles, which is more practical under V2V communication.
The main contributions are:
\begin{itemize}
\item %The proposed longitudinal controller takes advantage of
A novel  \emph{cooperative} non-identical longitudinal control is proposed. As compared to  \emph{decentralized} non-identical control ideas in the literature, we let the non-identical control gains increase less than linearly with the number of vehicles in the platoon, which is more scalable. Meanwhile, instead of weak string stability notions in the decentralized literature, we obtain disturbance rejection the sense of strong string stability. \textcolor{black}{Robustness to time lags approximating actuation delay is discussed};
\item It is shown that the non-identical approach is not necessary for lateral control. A novel cooperative  identical lateral controller is designed to achieve a suitable disturbance rejection performance, %only using the information of the preceding vehicle,
again in the sense of strong string stability;
\item In place of considering global coordinates as in most %decoupled
approaches, \textcolor{black}{relative coordinates are considered. As the relative coordinates describe the} relative motions of adjacent vehicles, \textcolor{black}{they better capture the on-board sensing of relative signals and the V2V communication among adjacent vehicles.} For both longitudinal and lateral cooperative control, robustness to delayed information due to V2V communication is discussed.
\end{itemize}

The remainder of the paper is organized as follows. In Section \ref{sec:02}, we %model the dynamics of the vehicle platoon and
give the preliminaries of string stability. Cooperative implementation of mesh stability is split into cooperative longitudinal control (Section \ref{sec:03}) and cooperative lateral control (Section \ref{sec:04}). In Section \ref{sec:05}, numerical tests are performed: the tests include a CarSim implementation, so as to validate the proposed approach in a realistic vehicle simulation platform. Conclusions are drawn in Section \ref{sec:06}.

%Throughout this paper,
{\bf Notations}: $ \mathbb{R}$ represents the set of real numbers; $j = \sqrt { - 1}$ is the imaginary unit for complex numbers; the $H_\infty$ norm of a transfer function is defined as ${\left\| G \right\|_{{\infty }}} = \mathop {\sup }\limits_\omega  | {G(j\omega )}|$; $\| \cdot \|$ stands for the Euclidean norm; $\hat x(s)$ indicates the Laplace transform of a signal $x(t)$, where $s$ is the Laplace operator.
\begin{figure}[tp]
    \centering
    \subfigure[][]{%
    \vspace{-0.1cm}
    \label{figure1}%
    \includegraphics[width=1.5in]{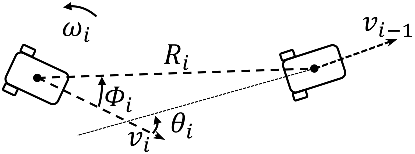}}
    \hspace{-40pt}%
    \hfill
    \subfigure[][]{%
    \label{figure1b}%
    \includegraphics[width=1.9in]{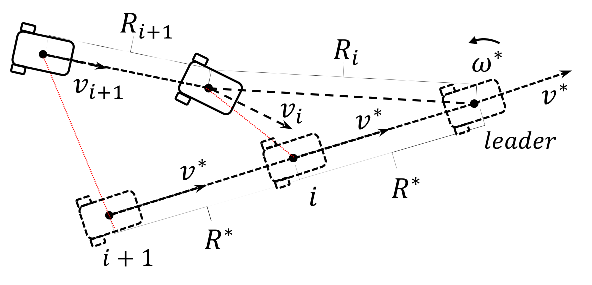}}
    \vspace{-0.1cm}
    \caption{Illustration of a vehicle platoon \ref{figure1} The reference dynamic unicycle model (edited from \cite{18Khatir}). \ref{figure1b} The desired position and orientation of the vehicle platoon shown in the dash line.}%
    \label{vehicle platoon}%
\end{figure}
% You must have at least 2 lines in the paragraph with the drop letter
% (should never be an issue)
%\vspace{-1pt}
\section{Model formulation and preliminaries}\label{sec:02}
\textcolor{black}{In order to consider curved driving and turning scenarios,} we adopt a unicycle kinematic vehicle model that combines longitudinal and lateral dynamics, cf. Fig. \ref{figure1}. This model is valid under the standard assumptions that roll, pitch and vertical motion are negligible \cite{18Khatir}. We consider a predecessor-follower topology where \textcolor{black}{relative position, relative velocity and relative orientation is acquired by the follower via on-board sensors (e.g. laser and camera), while} the predecessor's acceleration and angular velocity is acquired by the follower \textcolor{black}{via wireless V2V communication}. This is consistent with standard protocols for cooperative vehicles, such as CACC \cite{12Naus},\cite{13Ploeg},\cite{05Wei}. %For the analysis of string stability, we consider that the error dynamics are under zero initial conditions for both longitudinal and lateral direction. Under these assumptions,
In this setting, the vehicle dynamic is given by
\begin{equation}
\begin{aligned}\label{LL:01}
%\begin{cases}
\dot{R}_i(t)&=-v_i(t)\cos(\phi_i(t))+v_{i-1}(t)\cos(\theta_i(t)-\phi_i(t)),\\
\dot{v}_i(t)&=a_i(t),\\
\dot{\phi}_i(t)&=\frac{v_i(t)\sin(\phi_i(t))+v_{i-1}(t)\sin(\theta_i(t)-\phi_i(t))}{R_i(t)}-\omega_i(t),\\
\dot{\theta}_i(t)&=-\omega_i(t)+\omega_{i-1}(t),
%\end{cases}
\end{aligned}
\end{equation}
where $R_i$ is the relative distance between the center of gravity of vehicle $i$ and $i-1$, $\phi_i$ is the following angle between the velocity of vehicle $i$ and the relative distance vector, $\theta_i$ is the orientation angle and $v_i$ is the absolute linear velocity. The acceleration $a_i$ and angular velocity $\omega_i$ are the control inputs.

Following the decoupled approach, linearization can be made around the following equilibrium \cite{31Hu} of
\eqref{LL:01}
\begin{equation}\label{LL:02}
R_i=R^*,\ v_i=v^*,\ \phi_i=\phi^*,\ \theta_i=\theta^*,
\end{equation}
where $R^*$ and $v^*$ are the desired distance and velocity depending on the formation geometry, and $\phi^*$ and $\theta^*$ take into account the curvature of the path (tracking of a curved path may in general require a time-varying $\phi^*$ and $\theta^*$ \cite{02Bayuwindra}, \cite{05Wei}). The linearization of \eqref{LL:01} around the equilibrium \eqref{LL:02} results in:
\begin{subequations}\label{LL:03}
\begin{equation}
\begin{aligned}\label{LL:03a}
&\left[ \setlength{\arraycolsep}{2pt}{\begin{array}{*{20}{c}}
{{{\dot R}_i}(t)}\\
{{{\dot v}_i}(t)}
\end{array}} \right] = \left[ \setlength{\arraycolsep}{2pt}{\begin{array}{*{20}{c}}
0&{ - 1}\\
0&0
\end{array}} \right]\left[ \setlength{\arraycolsep}{1.5pt}{\begin{array}{*{20}{c}}
{{R_i}(t)}\\
{{v_i}(t)}
\end{array}} \right] +\left[ \setlength{\arraycolsep}{2pt}{\begin{array}{*{20}{c}}
0\\
1
\end{array}} \right]{a_i}(t) +\left[ \setlength{\arraycolsep}{2pt}{\begin{array}{*{20}{c}}
1\\
0
\end{array}} \right]{v_{i - 1}(t)},
\end{aligned}
\end{equation}
\begin{equation}
\begin{aligned}\label{LL:03b}
&\left[\setlength{\arraycolsep}{1.5pt} {\begin{array}{*{20}{c}}
{\begin{array}{*{20}{c}}
{{{\dot \phi }_i}(t)}
\end{array}}\\
{{{\dot \theta }_i}(t)}
\end{array}} \right] = \left[ \setlength{\arraycolsep}{1.8pt}{\begin{array}{*{20}{c}}
0&{\frac{{{v^*}}}{{{R^*}}}}\\
0&0
\end{array}} \right]\left[ \setlength{\arraycolsep}{1.8pt}{\begin{array}{*{20}{c}}
{{\phi _i}(t)}\\
{{\theta _i}(t)}
\end{array}} \right] +\left[ \setlength{\arraycolsep}{0.5pt}{\begin{array}{*{20}{c}}
{ - 1}\\
{ - 1}
\end{array}} \right]{\omega _i}(t) +\left[ \setlength{\arraycolsep}{1.8pt}{\begin{array}{*{20}{c}}
0\\
1
\end{array}} \right]{\omega _{i - 1}(t)},
\end{aligned}
\end{equation}
\end{subequations}
or, in a compact form
\begin{equation}\label{LL:04}
{\dot z_i}(t) = A{z_i}(t) + B{u_i}(t) + E{d_i}(t),
\end{equation}
with system states ${z_i}(t) = \begin{bmatrix}R_i(t) & v_i(t) & \phi_i(t) & \theta_i(t)\end{bmatrix}^T$, longitudinal and lateral control input ${u_i}(t) =\begin{bmatrix}a_i(t) & \omega_i(t) \end{bmatrix}^T$, exogenous input from the preceding vehicle ${d_i}(t) =\begin{bmatrix}v_{i-1}(t) & \omega_{i-1}(t)\end{bmatrix}^T$, and
$$A=\begin{bmatrix}
	0 &-1 & 0 & 0\\ 0 & 0 & 0 & 0 \\
	0 & 0 & 0& \frac{v^*}{R^*} \\
	0& 0& 0 & 0
\end{bmatrix},\ B=\begin{bmatrix}
0&0\\1&0\\0&-1\\0&-1
\end{bmatrix}, \ E=\begin{bmatrix}
1&0\\0&0\\0&0\\0&1
\end{bmatrix}.$$ All variables in \eqref{LL:03}-\eqref{LL:04} are to be intended as a deviation from the equilibrium \eqref{LL:02}.
We have that longitudinal and lateral analysis are decoupled into \eqref{LL:03a} and \eqref{LL:03b} respectively. %, which will be used for control and stability analysis.
These decoupled %longitudinal and lateral
dynamics are in line with standard literature \cite{05Wei,06ZHAO}.

The dynamics of the leading vehicle (i.e. the vehicle without any  predecessor) are described as
\begin{equation}\label{LL:00}
\begin{aligned}
\dot{v}_0(t)=a_0(t), \\
\dot{\theta}_0(t)=\omega_0(t),
\end{aligned}
\end{equation}
where $v_0$ and $\theta_0$ are the absolute linear velocity and orientation angle of the leader, while $a_0$ and $\omega_0$ are the leader's inputs. The leader's inputs play the role of external disturbances which should not be amplified throughout the vehicle platoon. Accordingly, we now define some errors to quantify the amplification of the disturbances.

\subsection{Error definitions}
The vehicle-following objective is to make the platoon drive with a certain orientation meanwhile maintaining constant distance and relative velocity between two adjacent vehicles, cf. Fig. \ref{figure1b}. For longitudinal direction, the distance error and velocity error of vehicle $i$ can be defined as
\begin{equation}\label{LL:06a}
\begin{aligned}
e_{i,1}(t)&=R_i(t)-R^*,  \\
e_{i,2}(t)&=v_{i-1}(t)-v_i(t).	
\end{aligned}
\end{equation}
%\begin{subequations}
%\begin{equation}\label{LL:06a}e_{i,1}=R_i-R^0,\end{equation}
%\begin{equation}\label{LL:06b}e_{i,2}=v_{i-1}-v_i.\end{equation}
%\end{subequations}
For lateral direction, the angle error and relative orientation error of vehicle $i$ are defined as
%\begin{subequations}
%\begin{equation}\label{LL:07a}e_{i,3}=\phi_i-\phi^0,\end{equation}
%\begin{equation}\label{LL:07b}e_{i,4}=\theta_i-\theta^0.\end{equation}
%\end{subequations}
\begin{equation}\label{LL:07a}
\begin{aligned}
e_{i,3}(t)&=\phi_i(t)-\phi^*,  \\
e_{i,4}(t)&=\theta_i(t)-\theta^*.	
\end{aligned}
\end{equation}
%Notice that for the lateral control, a constant desired state can be suitable for the turning or lane changing maneuver, whereas the tracking of a curved path in general requires a time-varying desired state \cite{02Bayuwindra}, \cite{05Wei}.

% needed in second column of first page if using \IEEEpubid
%\IEEEpubidadjcol

%\begin{figure}
%\begin{minipage}[t]{0.5\linewidth}
%\centering
%\includegraphics[width=1.98in]{2.pdf}
%%\caption{fig1}
%\end{minipage}%
%\begin{minipage}[t]{0.5\linewidth}
%\centering
%\includegraphics[width=1.98in]{1.pdf}
%%\caption{fig2}
%%\label{fig:side:b}
%\end{minipage}
%\caption{An example of ESS (left column) with $N=3$ in Definition \ref{def1} and SFSS (right column) edited from \cite{16Feng}, where $x_i(t)$ means the state fluctuation of vehicle $i$.}\label{99}
%\end{figure}

\subsection{Longitudinal and lateral string stability definitions}
The ability to attenuate a disturbance as it propagates throughout the platoon is called string stability \cite{13Ploeg}. The literature has proposed different ways to measure this effect. Popular definitions are the following.
\begin{definition}\label{def1}
\cite{16Feng} For a linear platoon system \eqref{LL:04} with predecessor-follower topology, the system is \emph{eventual string stable} (ESS), if the output transfer function between the leading vehicle $0$ and any other vehicle $i$, denoted as $G_{0,i}(s)$, satisfies that: there exists $N < m$, such that
\begin{equation}\label{LL:08}
{\left\| {{G_{0,i}(s)}} \right\|_{{\infty }}} \le 1, \ \forall i > N,\ m \in \mathbb{R}.
\end{equation}
\end{definition}
\begin{definition}\label{def2}
\cite{16Feng} For a linear platoon system \eqref{LL:04} with predecessor-follower topology, the system is \emph{strong frequency-domain string stable} (SFSS), if the output transfer function between vehicle $i-1$ and vehicle $i$, denoted as $G_{i-1,i}(s)$, satisfies that:
\begin{equation}\label{LL:09}
 \lVert G_{i-1,i}(s)\rVert_\infty \leq1, \ \forall i.
\end{equation}
\end{definition}
To further clarify the relationship between ESS and SFSS, note that SFSS implies ESS but not vice versa. In fact, SFSS is defined among any two adjacent vehicles independently on the platoon size, whereas ESS depends on the platoon size via the relation $N<m$. In other words, SFSS in \eqref{LL:09} requires each vehicle to attenuate disturbance, whereas ESS in \eqref{LL:08} allows disturbance amplifications inside the platoon, provided they are eventually rejected when reaching the end of the platoon. For this reason, ESS is also referred to as ``weak'' string stability \cite{16Feng}. Obviously, vehicular platoons with SFSS property outperform platoons with ESS property in terms of disturbance attenuation. In this work, we focus on SFSS.

For the special condition of $G_{i-1,i}(j\omega)=1, \forall \omega \geq 0$, the system falls into a non-robust degenerate scenario, called marginal string stability \cite{12Naus}, which is typically not desired. In most practical scenarios we have $G_{i-1,i}(0 )=1$ at zero frequency, and $\left|G_{i-1,i}(j\omega )\right|<1$ for $\omega>0$. In fact, the output transfer function $G_{i-1,i}$ can be regarded as the complementary sensitivity function (cf. \eqref{LL:15}, \eqref{LL:20} and \eqref{LL:37} later on), and $G_{i-1,i}(0)=1$ represents the possibility of vehicle $i$ to track a constant signal from vehicle $i-1$ (as expected in most complementary sensitivity functions \cite{12Naus} \cite{13Ploeg}). Therefore, based on the tracking errors \eqref{LL:06a}-\eqref{LL:07a}, $G_{i-1,i}$ takes the form
\begin{equation}\label{LL:10}
{\hat e_{i,j}}(s)=G_{i-1,i}(s){{\hat e_{i - 1,j}}(s)}, j\in\left\{ {1,2,3,4} \right\}, \ i=2,\cdots, n.
\end{equation}
\begin{remark}[Mesh stability] Definitions \ref{def1}-\ref{def2} have been originally proposed for longitudinal string stability \cite{13Ploeg,17Chehardoli,18Khatir}, although some literature used similar notions for disturbance propagation in the lateral direction \cite{29Papadimitriou}. %For example, lateral string stability of the linear dynamics was investigated  in \cite{29Papadimitriou} in the $l_\infty$ sense.
The work \cite{32Pant} was the first to propose a combined notion of longitudinal and lateral string stability, called mesh stability: a mesh stable platoon attenuates disturbances in both directions simultaneously. Sect. \ref{sec:03} and Sect. \ref{sec:04} will consider string stability in longitudinal and lateral sense, in accordance with mesh stability. %definition.
\end{remark}
\section{Longitudinal control}\label{sec:03}
This paper takes advantage of a feedforward and feedback information for designing \textcolor{black}{the input $a_i(t)$}, according to
\begin{equation}\label{LL:11}
a_{i}(t)=\alpha_i e_{i,1}(t)+\beta_i e_{i,2}(t)+\gamma_i a_{i-1}(t),
\end{equation}
 \textcolor{black}{where $\alpha_i$, $\beta_i$, $\gamma_i$ are control gains to be designed.} The terms "feedforward" and "feedback" underline that $e_{i,1}$ and $e_{i,2}$ are obtained via feedback from onboard sensors (e.g. laser), whereas the feedforward term $a_{i-1}$ is obtained by V2V communication as in CACC \cite{12Naus},\cite{13Ploeg}. The block diagram of the controlled system is in Fig. \ref{figure a}. First, we briefly prove that identical control gains $\alpha_i=\alpha$, $\beta_i=\beta$, $\gamma_i=\gamma$ for all vehicles, lead to a degenerate marginal string stability scenario. % with respect to achieving longitudinal SFSS as in Definition \ref{def2}.
\begin{figure}[t]
%\begin{center}
\centering
%\vspace{-10pt}
\scalebox{0.55}[0.55]{\includegraphics{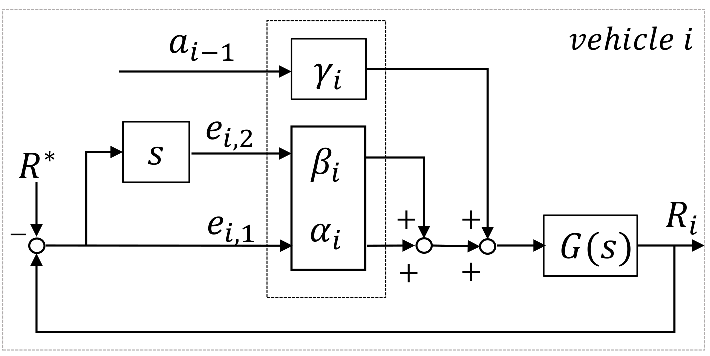}}
\caption{ Block diagram of %the feedforward and feedback
cooperative  non-identical control.} \label{figure a}
%\end{center}
\end{figure}

\subsection{Identical control approach}\label{sec:03a}
Consider the longitudinal error dynamics of \eqref{LL:06a}
\begin{equation}\label{LL:12}
\begin{aligned}
\dot{e}_{i,1}(t)&=e_{i,2}(t),\\
\dot{e}_{i,2}(t)&=a_{i-1}(t)-a_{i}(t),
\end{aligned}
\end{equation}
and substitute the controller \eqref{LL:11} using identical control gains into \eqref{LL:12}, so as to obtain
\begin{equation}\label{LL:13}
\ddot{e}_{i,1}(t)=-\beta\dot{e}_{i,1}(t)-\alpha e_{i,1}(t)+(1-\gamma)a_{i-1}(t).
\end{equation}
The transfer function of dynamics \eqref{LL:12} yields
\begin{equation}\label{LL:14}
\frac{{{{\hat e}_{i,1}(s)}}}{{{{\hat a}_{i-1}}(s)}} = \frac{{1 - \gamma }}{{ {s^2} + \beta s + \alpha }}.
\end{equation}
From the Routh-Hurwitz criterion, \eqref{LL:13} is asymptotically stable if and only if $\alpha, \beta>0$. After straightforward manipulations using the Laplace operator, one can obtain the following complementary sensitivity function $G_{i-1,i}(s)$ %for longitudinal string stability
according to \eqref{LL:10}
\begin{equation}\label{LL:15}
G_{i-1,i}(s)=\frac{{\hat e_{i,1}}(s)}{{\hat e_{i - 1,1}}(s)} = \frac{{\gamma {s^2} + \beta s + \alpha }}{{{s^2} + \beta s + \alpha }}.
\end{equation}
In \eqref{LL:15}, the SFSS condition \eqref{LL:09} can be translated as
\begin{equation}\label{LL:16}
\left| {\frac{{\alpha  - \gamma {\omega ^2} + \beta j\omega }}{{\alpha  - {\omega ^2} + \beta j\omega }}} \right| \le 1,\forall \omega
\end{equation}
which further implies
\begin{equation}\label{LL:17}
{\omega ^2}({\gamma ^2} - 1) + 2\alpha (1 - \gamma ) \le 0.
\end{equation}
It is clear that \eqref{LL:17} requires $-1 \le \gamma\leq1$ and $\gamma \geq 1$ simultaneously. The only possible value is $\gamma = 1$ which leads to the transfer function in \eqref{LL:15} to be exactly $1$. This is clearly a non-robust degenerate marginal string stability scenario. % with respect to string stability.

\subsection{Proposed non-identical control approach}\label{sec:03b}
\textcolor{black}{To overcome the non-robust degenerate scenario of Sect. \ref{sec:03a}, we propose the }controller \eqref{LL:11} with non-identical gains, for which the following theorem holds.
\begin{theorem}\label{thm1}
 [Longitudinal stability and string stability] Consider the longitudinal dynamics in \eqref{LL:03a}, with the non-identical cooperative controller \eqref{LL:11}. Then, the tracking errors \eqref{LL:06a} converge asymptotically when the control gains in \eqref{LL:11} are all positive. Moreover, the system is longitudinal SFSS if the control gains also satisfy the following conditions
\end{theorem}
\begin{equation}\label{LL:18}
\left\{ {\begin{aligned}
&{\gamma _i} \le \frac{1}{\gamma_{i-1}}, {\gamma _i}+\frac{1}{\gamma_{i-1}} \ge 2,\\
&{{\alpha _{i - 1}} \leq \frac{{1 - {\gamma _{i - 1}}}}{{1 - {\gamma _i}}}{\alpha _i}},\\
&{\beta _{i - 1}^2 \leq {{\left( {\frac{{1 - {\gamma _{i - 1}}}}{{1 - {\gamma _i}}}} \right)}^2}\left( { - 2{\alpha _i} + \beta _i^2} \right) + 2{\alpha _{i - 1}}{\gamma _{i - 1}}}.\\
\end{aligned}} \right.
\end{equation}

\begin{IEEEproof}
Similarly to \eqref{LL:14}, the stability of the error dynamics requires the control gains to be all positive. To pursue string stability with the non-identical cooperative feedback controller in \eqref{LL:11} and error dynamics in \eqref{LL:12}, one has
\begin{equation}\label{LL:19}
\frac{{{{\hat v}_i}(s)}}{{{{\hat v}_{i - 1}(s)}}} = \frac{{{\gamma _i}{s^2} + {\beta _i}s + {\alpha _i}}}{{{s^2} + {\beta _i}s + {\alpha _i}}}.
\end{equation}
After substituting \eqref{LL:19} into \eqref{LL:14}, the magnitude of the complementary sensitivity function $G_{i-1,i}(s)$ is obtained
\begin{align}\label{LL:20}
\left|G_{i-1,i}(s)\right|= &\left| {\frac{{{{\hat e}_{i,1}(s)}}}{{{{\hat e}_{i - 1,1}(s)}}}} \right|, \nonumber\\ = &\left| {\frac{{{{\hat e}_{i,1}(s)}}}{{{{\hat a}_i}(s)}}\frac{{{{\hat v}_i}(s)}}{{{{\hat v}_{i - 1}(s)}}}\frac{{{{\hat a}_{i - 1}(s)}}}{{{{\hat e}_{i - 1,1}(s)}}}} \right|,\nonumber\\
=&\left| {\frac{{1 - {\gamma _i}}}{{1 - {\gamma _{i - 1}}}}\frac{{{\gamma _{i - 1}}{s^2} + {\beta _{i - 1}}s + {\alpha _{i - 1}}}}{{{s^2} + {\beta _i}s + {\alpha _i}}}} \right|.
\end{align}
Condition \eqref{LL:09} applied to \eqref{LL:20} becomes a polynomial of $\omega$ as
\begin{align}\label{LL:21}
&\omega^4[(1-\gamma_i)^2\gamma^2_{i-1}-(1-\gamma_{i-1})^2]+
	\omega^2[-2(1-\gamma_i)^2\alpha_{i-1}\gamma_{i-1}\nonumber \\
&\quad+(1 -\gamma_i)^2\beta^2_{i-1}+2(1-\gamma_{i-1})^2\alpha_i-(1-\gamma_{i-1})^2\beta_i^2] \nonumber\\
&\quad +(1-\gamma_i)^2\alpha^2_{i-1}-(1-\gamma_{i-1})^2\alpha_i^2\leq 0, \quad \forall\omega.
\end{align}
%\begin{align}\label{LL:21}
%& {\omega ^4}[{{{( {1 - {\gamma _i}} )}^2}{\gamma ^2_{i - 1}} - {{\left( {1 - {\gamma _{i - 1}}} \right)}^2}}]+\nonumber\\
%&{\omega ^2}[- 2{( {1 - {\gamma _i}})^2}{\alpha _{i - 1}}{\gamma _{i - 1}}+ {\left( {1 - {\gamma _i}} \right)^2}{\beta ^2_{i - 1}} +\nonumber\\
%&\quad \quad \quad \quad \quad \quad  2{\left( {1 - {\gamma _{i - 1}}} \right)^2}{\alpha _i} - {\left( {1 - {\gamma _{i - 1}}} \right)^2}{\beta ^2_i}]+\nonumber\\
%&  {\left( {1 - {\gamma _i}} \right)^2}{\alpha ^2_{i - 1}} - {\left( {1 - {\gamma _{i - 1}}} \right)^2}{\alpha^2 _i}<0, \forall\omega.
%\end{align}
For \eqref{LL:21} to hold, it is sufficient that the coefficients of the polynomial \eqref{LL:21} are negative, which is true if the control gains are positive and satisfy the following condition
%\begin{gather}
% {{{\left( {1 - {\gamma _i}} \right)}^2}{\gamma^2 _{i - 1}} - {{\left( {1 - {\gamma _{i - 1}}} \right)}^2}}<0 \Rightarrow {\gamma _{i - 1}} < {\gamma _i} < 1,\nonumber\\
%{\left( {1 - {\gamma _i}} \right)^2}{\alpha ^2_{i - 1}} - {\left( {1 - {\gamma _{i - 1}}} \right)^2}{\alpha^2 _i}<0,\nonumber\\
%{\left( {1 - {\gamma _i}} \right)^2}\left( { - 2{\alpha _{i - 1}}{\gamma _{i - 1}} + {\beta ^2_{i - 1}}} \right) < {\left( {1 - {\gamma _{i - 1}}} \right)^2}\left( { - 2{\alpha _i} + {\beta^2 _i}} \right).\label{LL:22}
%\end{gather}
%(1-\gamma_i)^2\gamma^2_{i-1}-(1-\gamma_{i-1})^2<0 \Rightarrow
%\gamma_{i-1}<\gamma_i< 1,\nonumber\\
\begin{align}\label{LL:22}
&( {1 - {\gamma _i}} ){\gamma _{i - 1}} \le 1 - {\gamma _{i - 1}}, \nonumber\\
&(1-\gamma_i)^2\alpha^2_{i-1}-(1-\gamma_{i-1})^2\alpha^2_i \leq0,\\
&(1-\gamma_i)^2(-2\alpha_{i-1}\gamma_{i-1}+\beta^2_{i-1})\leq(1-\gamma_{i-1})^2(-2\alpha_i+\beta^2_i).\nonumber
\end{align}
The conditions in \eqref{LL:18} can be obtained from \eqref{LL:22} by simple manipulations. This concludes the proof.
\end{IEEEproof}

\begin{remark}
[Increase of gains] Non-identical control for longitudinal string stability was proposed in \cite{18Khatir} in the PID form
\begin{align}
u_i(s)=( {{P_i} + \frac{{{I_i}}}{s} + {D_i}s})( {\hat{R}_i(s) - \hat{R}^{0}(s)} ),\nonumber
\end{align}
where
\begin{equation}\label{23}
\left\{ {\begin{aligned}
&{{I_i} = {I_0}},\\
&{{P_i} = {P_0} + \alpha i}, \\
&{{D_i} = {D_0} + \beta i}.
\end{aligned}} \right.i = 2,...,n.
\end{equation}
\textcolor{black}{The controller in \cite{18Khatir} is \emph{decentralized}, as it uses only on-board measurements (relative spacing and velocity). On the other hand, our proposed controller \eqref{LL:11} is \emph{cooperative}, as it involves communication of $a_{i-1}$ from the preceding vehicle.} Note that %it is straightforward to see that
the control gains in \eqref{23} increase linearly with the number of vehicles, which can bring large control effort for long platoons. On the other hand, %our work relaxes the linear increase in \eqref{23}, as
the proposed gains in \eqref{LL:18} can increase less than linearly. In addition, %the  performance in
\cite{18Khatir} only reaches \emph{eventual string stability} as in  Definition \ref{def1}, as compared to the stronger notion of SFSS in \textit{Theorem \ref{thm1}}.
\end{remark}
\subsection{\textcolor{black}{Robustness to time lags}}\label{sec:03c}
\textcolor{black}{In place of second-order vehicle dynamics let us now consider the third-order vehicle dynamics suggested in some platooning literature \cite{13Ploeg} \cite{Liu} \cite{a6} \cite{H}, with acceleration dynamics} ${{\dot a}_i} = \frac{1}{\tau }\left( -{{a_i} + {u_i}} \right)$: here, $\tau$ is a positive time constant representing the engine time constant, \textcolor{black}{i.e. it can approximate an actuator time lag}.

Consider the new error state ${{\bar e}_{i,3}} = {a_{i - 1}} - {a_i}$, which results in the third-order error dynamicS
\begin{equation}\label{LL:24}
\begin{aligned}
{{\dot e}_{i,1}(t)} &= {e_{i,2}(t)},\\
{{\dot e}_{i,2}(t)} &= {\bar e}_{i,3}(t),\\
{{\dot {\bar e}}_{i,3}(t)} &=  - \frac{1}{\tau }{\bar e_{i,3}(t)} - \frac{1}{\tau }{u_i(t)} + \frac{1}{\tau }{u_{i - 1}(t)}.
\end{aligned}
\end{equation}
 Consider the non-identical %feedforward and feedback
control law
\begin{equation}\label{LL:25}
{u_i}(t) = \alpha_i {e_{i,1}(t)} + \beta_i {e_{i,2}(t)} + \gamma_i {{\bar e}_{i,3}(t)} + \lambda_i {u_{i - 1}(t)}.
\end{equation}
As in Sect. \ref{sec:03a}, let us show that, also for the third-order dynamics \eqref{LL:24}, the identical control approach $\alpha_i=\alpha$, $\beta_i=\beta$, $\gamma_i=\gamma$, $\lambda_i=\lambda$ leads to degenerate string stability. In fact,  we obtain the  complementary sensitivity function $G_{i-1,i}(s)$ as follows
\begin{equation}\label{LL:26}
\frac{{{{\hat e}_{i,1}(s)}}}{{{{\hat u}_i}(s)}} = \frac{{1 - \lambda }}{{\tau \lambda {s^3} + \left( {\gamma  + \lambda } \right){s^2} + \beta s + \alpha }},
\end{equation}
\begin{align}\label{LL:27}
\left|G_{i-1,i}(s)\right|=& \left| \frac{{{{\hat e}_{i ,1}(s)}}}{{{{\hat e}_{i-1,1}(s)}}} \right|, \nonumber\\
= & \left|\frac{{\tau \lambda {s^3} + \left( {\gamma  + \lambda } \right){s^2} + \beta s + \alpha }}{{\tau {s^3} + \left( {\gamma  + 1} \right){s^2} + \beta s + \alpha }}\right|.
\end{align}
Accordingly, \eqref{LL:27} can be transformed into a polynomial of $\omega$ in order to meet condition \eqref{LL:09} as
%\begin{equation}\label{LL:28}
\begin{align}\label{LL:28}
&{\tau ^2}\left( {{\lambda ^2} - 1} \right){\omega ^4} + \left( {\lambda  - 1} \right)\left( {\lambda  + 1 + 2\gamma  - 2\tau \beta } \right){\omega ^2} \nonumber\\
&\quad- 2\alpha \left( {\lambda  - 1} \right)  \le 0, \ \forall \omega \nonumber\\
& {\lambda ^2} - 1 \le 0 \Rightarrow 0 < \lambda  \le 1, \nonumber\\
&2\alpha \left( {\lambda  - 1} \right) \ge 0 \Rightarrow \lambda  \ge 1.
\end{align}
%\end{equation}
The conditions in \eqref{LL:28} can be satisfied only for $\lambda = 1$, which makes $G_{i-1,i}(s)$ in \eqref{LL:27} identically equal to 1. This degenerate scenario is a major reason why most literature on third-order vehicle dynamics, in order to attain string stability with identical control, replaces the constant spacing policy \eqref{LL:06a} with the velocity-dependent spacing policy $e_{i,1} = R_i - R^*- hv_i$, being $h>0$ the time-headway constant \cite{13Ploeg,a6,a7}. This velocity-dependent spacing policy gives \textcolor{black}{string stability via identical control, since the complementary sensitivity function becomes $G_{i-1,i}(s) = 1/(1+hs)$ instead of 1.}

\textcolor{black}{Similar to what shown in Sect. \ref{sec:03b}, let us now attain string stability with constant-spacing policy \eqref{LL:06a} via the non-identical control \eqref{LL:25}. Using similar steps as in the proof of \textit{Theorem \ref{thm1}}, the complementary sensitivity function $G_{i-1,i}(s)$ for the third-order dynamics \eqref{LL:24} can be calculated as
\begin{align}\label{new1}
\left|G_{i\!-\!1,i}(s)\right|\!=& \left| \frac{{{{\hat e}_{i ,1}(s)}}}{{{{\hat e}_{i-1,1}(s)}}} \right|, \nonumber\\
 =& \left|\frac{{1 \!-\! {\lambda _i}}}{{1 \!-\! {\lambda _{i \!-\! 1}}}}\frac{{\tau {\lambda _{i \!-\! 1}}{s^3} \!+\! ({\gamma _{i \!-\! 1}} \!+\! {\lambda _{i \!-\! 1}}){s^2} \!+\! {\beta _{i \!-\! 1}}s \!+ \! {\alpha _{i \!-\! 1}}}}{{\tau {s^3} \!+\! ({\gamma _i} \!+\! 1){s^2} \!+\! {\beta _i}s \!+\! {\alpha _i}}} \right|
\end{align}
and the resulting polynomial of $\omega$ satisfying \eqref{LL:09} is
\begin{align}\label{new2}
&{\omega ^6}{\tau ^2}[{(1 \!-\! {\lambda _i})^2}{\lambda _{i \!-\! 1}^2} \!-\! {(1 \!-\! {\lambda _{i \!-\! 1}})^2}]\!+\! {\omega ^4}[{(1 \!-\! {\lambda _i})^2}( - 2\tau {\lambda _{i \!-\! 1}}{\beta _{i \!-\! 1}} \nonumber\\
& \!+\! {({\gamma _{i\! -\! 1}} \!+\! {\lambda _{i \!-\! 1}})^2}) \!-\! {(1 \!-\! {\lambda _{i \!-\! 1}})^2}(\! - 2\tau {\beta _i} \!+\! {({\gamma _i} \!+\! 1)^2})]\!+\!{\omega ^2}[{(1 \!-\! {\lambda _i})^2} \nonumber\\
&({\beta _{i \!-\! 1}^2} \!-\! 2{\alpha _{i \!-\! 1}}({\gamma _{i \!-\! 1}} \!+\! {\lambda _{i \!-\!1}})) \!-\! {(1 \!-\! {\lambda _{i \!-\! 1}})^2}({\beta _i^2} \!-\! 2{\alpha _i}({\gamma _i} \!+ \!1))] \nonumber\\
&+{(1 \!-\! {\lambda _i})^2}{\alpha _{i \!-\! 1}}^2 \!-\! {(1 \!-\! {\lambda _{i \!-\! 1}})^2}{\alpha _i}^2 \le 0, \ \forall \omega.
\end{align}
We obtain the following results.
\begin{corollary} \label{cor1}
[Robustness of longitudinal string stability to actuation lags]
Consider the third-order longitudinal dynamics \eqref{LL:24} with time lag $\tau>0$ and with the non-identical cooperative controller \eqref{LL:25}. Then, the tracking errors
\eqref{LL:06a} converge asymptotically when the control gains in \eqref{LL:25} are all positive. Moreover, the system is longitudinal SFSS if the control gains also satisfy the following conditions
\begin{equation}\label{new3}
\left\{ {\begin{aligned}
&{\lambda _i} \le \frac{1}{\lambda_{i-1}}, {\lambda _i}+\frac{1}{\lambda_{i-1}} \ge 2,\\
&{{\alpha _{i - 1}} \leq \frac{{1 \!-\! {\lambda _{i - 1}}}}{{1 \!-\! {\lambda _i}}}{\alpha _i}},\\
&{\beta _{i \!-\! 1}^2 \! \leq \! {{\left( \!{\frac{{1 \!-\! {\lambda _{i \!-\! 1}}}}{{1 \!-\! {\lambda _i}}}} \! \right)}^2}\!\left( { \!- 2{\alpha _i}({\gamma _i} \!+\! \lambda_{i\!-\!1}) \!+\! \beta _i^2} \right) \!+\! 2{\alpha _{i\! -\! 1}}({\gamma _{i \!-\! 1}}}\!+\! 1),\\
&\!\!-\!\! 2\tau {\lambda _{i \!-\! 1}}{\beta _{i \!-\! 1}} \!\!+\! {({\gamma _{i \!- \!1}} \!\!+\! {\lambda _{i \!-\! 1}})^2} \!\le \!{{\left( \!{\frac{{1 \!-\! {\lambda _{i \!-\! 1}}}}{{1 \!-\! {\lambda _i}}}} \! \right)}^2}\!( \!-\! 2\tau {\beta _i} \!+\! {({\gamma _i} \!+\! 1)^2}).
\end{aligned}} \right.
\end{equation}
\end{corollary}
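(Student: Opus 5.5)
The plan mirrors the proof of Theorem \ref{thm1}, with one extra polynomial coefficient to handle.

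\emph{Internal stability.} Substituting \eqref{LL:25} into \eqref{LL:24} gives a closed-loop error dynamics driven by the exogenous term $u_{i-1}$. Its characteristic polynomial is the cubic denominator appearing in \eqref{new1}, namely $\tau s^3+(\gamma_i+1)s^2+\beta_i s+\alpha_i$ (up to a positive factor). I would apply the Routh--Hurwitz criterion to this cubic. Positivity of the gains gives positive coefficients. For the degree-three case one also needs $(\gamma_i+1)\beta_i>\tau\alpha_i$, so I expect to have to state this explicitly as an additional standing assumption on the positive gains, in the same spirit as ``all gains positive'' in Theorem \ref{thm1}. This yields asymptotic convergence of \eqref{LL:06a} when the leader input vanishes.

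\emph{String stability.} Next I would compute $\hat u_i/\hat u_{i-1}$, the analogue of \eqref{LL:19}. Combining it with the per-vehicle map $\hat e_{i,1}/\hat u_i$, the non-identical version of \eqref{LL:26}, gives the ratio $\hat e_{i,1}/\hat e_{i-1,1}$ exactly as in the chain \eqref{LL:20}. This reproduces \eqref{new1}. I would then write $s=j\omega$ and square the moduli. The SFSS requirement $|G_{i-1,i}(j\omega)|\le 1$ becomes
\begin{equation*}
(1-\lambda_i)^2\bigl|N_{i-1}(j\omega)\bigr|^2-(1-\lambda_{i-1})^2\bigl|D_i(j\omega)\bigr|^2\le 0,
\end{equation*}
where $N_{i-1}$ and $D_i$ denote the numerator and denominator cubics in \eqref{new1}. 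Expanding real and imaginary parts gives the even polynomial \eqref{new2} in $\omega$, with coefficients of $\omega^6,\omega^4,\omega^2,\omega^0$.

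\emph{Sufficient coefficient conditions.} As in \eqref{LL:22}, it suffices that each of the four coefficients is nonpositive, since $\omega^{2k}\ge 0$. I would treat them one at a time.
\begin{itemize}
\item The $\omega^6$ coefficient requires $(1-\lambda_i)\lambda_{i-1}\le 1-\lambda_{i-1}$. With positive gains and the sign convention $\lambda<1$ used in Theorem \ref{thm1}, this rearranges into the $\lambda$-conditions in the first line of \eqref{new3}.
\item The constant term gives the $\alpha$ ratio condition, after taking square roots of positive quantities.
\item The $\omega^2$ coefficient, divided by $(1-\lambda_i)^2$, gives the $\beta$ inequality.
\item The $\omega^4$ coefficient gives the last line of \eqref{new3}, which is the genuinely new condition involving $\tau$.
\end{itemize}

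\emph{Main obstacle.} The hard part is the algebraic bookkeeping. The coefficients in \eqref{new2} must be expanded correctly, and the gain indices in the $\omega^2$ and $\omega^4$ inequalities of \eqref{new3} must be matched carefully against that expansion; for instance, $\gamma_i+\lambda_{i-1}$ versus $\gamma_i+1$ should be checked. The Routh--Hurwitz cubic condition is a second point requiring care. Otherwise the argument is the same coefficientwise sufficient test used for Theorem \ref{thm1}.
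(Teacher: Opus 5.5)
\emph{Same approach, but the $\omega^2$ step fails, and the corollary as written is false.}

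Your route is the paper's own. The paper also just displays \eqref{new1} and \eqref{new2} and reads \eqref{new3} off the coefficients. Your handling of the $\omega^6$, $\omega^4$ and constant coefficients is fine: lines~1, 4 and~2 of \eqref{new3} do make those coefficients nonpositive, and with positive gains they force $\lambda_{i-1},\lambda_i<1$.

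The gap is the $\omega^2$ step you flagged, and it is more than bookkeeping. Dividing the $\omega^2$ coefficient of \eqref{new2} by $(1-\lambda_i)^2$ gives
\[
\beta_{i-1}^2\le\Bigl(\frac{1-\lambda_{i-1}}{1-\lambda_i}\Bigr)^2\bigl(\beta_i^2-2\alpha_i(\gamma_i+1)\bigr)+2\alpha_{i-1}(\gamma_{i-1}+\lambda_{i-1}).
\]
Line~3 of \eqref{new3} instead has $\gamma_i+\lambda_{i-1}$ and $\gamma_{i-1}+1$ in those two places, so it does not match the paper's own \eqref{new2}. Its right-hand side exceeds the one above by $2(1-\lambda_{i-1})\bigl(\bigl(\frac{1-\lambda_{i-1}}{1-\lambda_i}\bigr)^2\alpha_i+\alpha_{i-1}\bigr)>0$. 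Hence it does not make the $\omega^2$ coefficient nonpositive.

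In fact the corollary as written is false. Take identical gains on vehicles $i-1$ and $i$ with $0<\lambda<1$, $2\tau\beta\le 2\gamma+1+\lambda$ and $(1+\gamma)\beta>\tau\alpha$, for example $\tau=0.5$, $\alpha=\beta=\gamma=1$, $\lambda=0.5$. Then:
\begin{itemize}
\item line~1 holds and line~2 is an equality;
\item line~3 reduces to $0\le 2\alpha(1-\lambda)$;
\item line~4 reduces to $2\tau\beta\le 2\gamma+1+\lambda$.
\end{itemize}
Yet \eqref{new2} has zero constant term and $\omega^2$ coefficient $2\alpha(1-\lambda)^3>0$, so $|G_{i-1,i}(j\omega)|>1$ for small $\omega>0$. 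For the numbers above, $|N_{i-1}(j\omega)|^2-|D_i(j\omega)|^2=\omega^2-1.25\omega^4-0.1875\omega^6$. This is exactly the identical-gain degeneracy the paper itself derives in \eqref{LL:28}.

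What your argument does prove is the corrected statement, with line~3 replaced by the displayed inequality. That inequality is the direct analogue of line~3 of \eqref{LL:18}, with $\gamma_{i-1}+\lambda_{i-1}$ and $\gamma_i+1$ playing the roles of $\gamma_{i-1}$ and $1$.

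Your Routh--Hurwitz remark is also correct, and the extra condition is necessary rather than optional. Positivity alone does not make $\tau s^3+(\gamma_i+1)s^2+\beta_i s+\alpha_i$ Hurwitz; for example $\tau=\alpha_i=1$, $\gamma_i=\beta_i=0.1$ is unstable. So the corollary's first claim also needs $(1+\gamma_i)\beta_i>\tau\alpha_i$. The paper gives no argument for stability beyond ``similar steps''.
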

}

\subsection{Robustness to V2V communication delay}
When the preceding vehicle communicates wirelessly its own acceleration $a_{i-1}$, a communication delay may arise and it becomes crucial to study how to make string stability robust against this delay \cite{new3}. To this purpose, \eqref{LL:11} is modified as
\begin{equation}\label{LL:29}
{a_i}\left( t \right) = {\alpha _i}{e_{i,1}}\left( t \right) + {\beta _i}{e_{i,2}}\left( t \right) + {\gamma _i}{a_{i - 1}}\left( {t - {t_d}} \right)
\end{equation}
being $t_d$  a constant communication delay. To make the analysis of SFSS analytically tractable, we adopt a first-order Pade approximation technique as in \cite{new4} %, which means
\begin{equation}\label{LL:30}
%{{\hat a}_{i - 1}}\left( {t - {t_d}} \right) =
{e^{ - {t_d}s}}{{\hat a}_{i - 1}}\left( s \right) \approx \frac{{1 - \frac{{{t_d}s}}{2}}}{{1 + \frac{{{t_d}s}}{2}}}{{\hat a}_{i - 1}}\left( s \right)
\end{equation}
where ${e^{ - {t_d}s}}{{\hat a}_{i - 1}}\left( s \right)$ is the Laplace transform of ${a_{i - 1}}\left( {t - {t_d}} \right)$. By introducing an auxiliary variable ${{\bar x}_{i - 1}}$, it is possible to write \eqref{LL:30} as a first-order dynamical system
\begin{equation}\label{LL:31}
 {{\dot {\bar x}}_{i - 1}}\left( t \right) = -\frac{2}{t_d}{{\bar x}_{i - 1}}\left( t \right)  +\frac{4}{{t_d}}{a_{i - 1}}\left( t \right).
\end{equation}
Accordingly, the delayed feedforward term ${a_{i - 1}}\left( {t - {t_d}} \right)$ can be approximated as
\begin{equation}\label{LL:32}
\begin{aligned}
{a_{i - 1}}\left( {t - {t_d}} \right) = {{\bar x}_{i - 1}}\left( t \right) - {a_{i - 1}}\left( t \right).
\end{aligned}
\end{equation}
%Based on the auxiliary error variable ${{\bar x}_{i - 1}}$,
We obtain the new error dynamics as
\begin{equation}\label{LL:33}
\left[ { \setlength{\arraycolsep}{1.5pt}\begin{array}{*{20}{c}}
{{{\dot e}_{i,1}}\left( t \right)}\\
{{{\dot e}_{i,2}}\left( t \right)}\\
{{{\dot {\bar x}}_{i - 1}}\!\left( t \right)}
\end{array}} \right] \!\!=\!\! \left[  \setlength{\arraycolsep}{1.2pt}{\begin{array}{*{20}{c}}
0&1&0\\
{ - {\alpha _i}}&{ - {\beta _i}}& -\gamma_i\\
0&0& - \frac{2}{t_d}
\end{array}} \right]\!\!\!\left[  \setlength{\arraycolsep}{1.5pt}{\begin{array}{*{20}{c}}
{{e_{i,1}}\left( t \right)}\\
{{e_{i,2}}\left( t \right)}\\
{{{\bar x}_{i - 1}}\!\left( t \right)}
\end{array}} \right] \!+ \!\left[  \setlength{\arraycolsep}{1pt}{\begin{array}{*{20}{c}}
0\\
{{1\!+\!\gamma _i}}\\
{ \frac{4}{t_d}}
\end{array}} \right]\!{a_{i - 1}}\!\left( t \right)
\end{equation}
It is clear that dynamics  \eqref{LL:33}  are asymptotically stable when $\alpha_i,\beta_i>0$ and $a_{i-1}(t)\!=\!0$. Then, we explore the influence of delay on string stability with the following result.
\begin{theorem}\label{thm2}
[Robustness of longitudinal string stability to time-delay] Consider the longitudinal dynamics \eqref{LL:33} and the non-identical cooperative controller \eqref{LL:29} with  constant time delay $t_d$. The system is longitudinal SFSS when the control gains satisfy
\begin{equation}\label{LL:34}
\left\{\begin{aligned}
 & 0<{\gamma _i} \le \frac{1}{\gamma_{i-1}}, {\gamma _i} + \frac{1}{{{\gamma _{i - 1}}}} \ge 2,\\
 & {\alpha _{i - 1}} \le \frac{{1 - {\gamma _{i - 1}}}}{{1 - {\gamma _i}}}{\alpha _i},\\
% & \! \frac{1}{q_i^{2}} \!\left( {\frac{{\beta _{i \!-\! 1}^2}}{{\gamma _{i \!-\! 1}^2}} \!-\! 2\frac{{{\alpha _{i \!-\! 1}}}}{{{\gamma _{i \!-\! 1}}}}} \right) \!<\beta_i^2 \!-\! 2\alpha_i <\frac{1}{p_i^{2}}\!\left( {2\frac{{{\alpha _{i \!-\! 1}}}}{{{\gamma _{i \!-\! 1}}}} \!+ \! \frac{{\beta _{i \!-\! 1}^2}}{{\gamma _{i \!-\! 1}^2}}} \right)\!.\\
&t_d^2\!\left(\! {2\frac{{{\alpha _{i \!-\! 1}}}}{{{\gamma _{i \!-\! 1}}}} \!+\! \frac{{\beta _{i \!-\! 1}^2}}{{\gamma _{i \!-\! 1}^2}} \!+\! p_i^2\!\left( {2{\alpha _i} \!-\! \beta _i^2} \right)} \!\right) \!+\! 8{t_d}\frac{{{\beta _{i \!-\! 1}}}}{{{\gamma _{i \!-\! 1}}}} \!+\! 4\!\left( {1 \!-\! p_i^2} \right) \!\le \!0,\\
&t_d^2\!\left(\! {2\frac{{{\alpha _{i \!-\! 1}}}}{{{\gamma _{i \!-\! 1}}}} \!+\! \frac{{\beta _{i \!-\! 1}^2}}{{\gamma _{i \!-\! 1}^2}} \!+\! q_i^2\!\left( {2{\alpha _i} \!-\! \beta _i^2} \right)}\! \right) \!+\! 8{t_d}\frac{{{\beta _{i \!-\! 1}}}}{{{\gamma _{i \!-\! 1}}}} \!+\! 4\!\left( {1 \!-\! q_i^2} \right)\! \le\! 0.
\end{aligned}
 \right.
\end{equation}
with  $q_i=\frac{{\frac{1}{{{\gamma _{i - 1}}}} - 1}}{{1 - {\gamma _i}}}$, $p_i=\frac{{\frac{1}{{{\gamma _{i - 1}}}} + 1}}{{1 + {\gamma _i}}}$.
\end{theorem}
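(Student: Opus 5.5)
The plan is to repeat the frequency-domain argument of Theorem \ref{thm1}, now with the Pad\'e factor $D(s)=\frac{1-t_ds/2}{1+t_ds/2}$ from \eqref{LL:30} multiplying the feedforward term. From \eqref{LL:12} and \eqref{LL:29}, the spacing error satisfies
\[
\frac{\hat e_{i,1}(s)}{\hat a_{i-1}(s)}=\frac{1-\gamma_i D(s)}{s^2+\beta_i s+\alpha_i}.
\]
The velocity transfer is
\[
\frac{\hat v_i(s)}{\hat v_{i-1}(s)}=\frac{\gamma_i D(s)s^2+\beta_i s+\alpha_i}{s^2+\beta_i s+\alpha_i}.
\]
Chaining these exactly as in \eqref{LL:20} gives
\[
G_{i-1,i}(s)=\frac{1-\gamma_i D(s)}{1-\gamma_{i-1}D(s)}\cdot\frac{\gamma_{i-1}D(s)s^2+\beta_{i-1}s+\alpha_{i-1}}{s^2+\beta_i s+\alpha_i}.
\]
Internal stability needs no new work: \eqref{LL:33} is block triangular, with the $\bar x_{i-1}$ pole at $-2/t_d$, and the remaining block is Hurwitz for $\alpha_i,\beta_i>0$.

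The key step is to pull $\gamma_{i-1}$ out of both the denominator factor and the numerator polynomial. This gives
\[
|G_{i-1,i}(j\omega)|=\frac{|1-\gamma_i D|}{|\tfrac{1}{\gamma_{i-1}}-D|}\cdot\frac{|D(j\omega)(j\omega)^2+b\,j\omega+a|}{|(j\omega)^2+\beta_i j\omega+\alpha_i|},
\]
where $b=\beta_{i-1}/\gamma_{i-1}$ and $a=\alpha_{i-1}/\gamma_{i-1}$. Since $|D(j\omega)|=1$, write $D=e^{j\psi}$. The first factor is then a ratio of distances from the point $D$ on the unit circle to $1/\gamma_i$ and to $1/\gamma_{i-1}$. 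The first line of \eqref{LL:34}, with $0<\gamma_i$, fixes the ordering of these two real points relative to the circle, so the ratio is monotone in $\cos\psi$. I therefore expect it to attain its extremes at $D=1$, giving $(1-\gamma_i)/(\tfrac1{\gamma_{i-1}}-1)=1/q_i$, and at $D=-1$, giving $(1+\gamma_i)/(\tfrac1{\gamma_{i-1}}+1)=1/p_i$. The second line of \eqref{LL:34} plays the same role as in \eqref{LL:22}: it controls the low-frequency term.

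It then suffices to show, for $c\in\{p_i,q_i\}$, that
\[
|D(j\omega)(j\omega)^2+bj\omega+a|^2\le c^2|(j\omega)^2+\beta_ij\omega+\alpha_i|^2 .
\]
To do this, multiply the left-hand expression inside the modulus by $1+t_dj\omega/2$. Its squared modulus becomes a polynomial in $\omega^2$, while the right side picks up the factor $1+t_d^2\omega^2/4$. Expanding and collecting powers of $\omega$ in the spirit of \eqref{LL:21}, and then requiring each coefficient to be nonpositive, should produce exactly the two quadratic-in-$t_d$ inequalities in \eqref{LL:34}. Those inequalities contain the combination $2a+b^2$, the cross term $8t_d b$ coming from expanding $(1-t_dj\omega/2)(j\omega)^2$ against $bj\omega$, and $4(1-c^2)$. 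The remaining coefficients should be handled by the $\alpha$-condition and the $\gamma$-condition.

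I expect the main obstacle to be the decoupling step. Bounding $|G|$ by the product of the two factor maxima is only sufficient, and that is acceptable since the theorem claims only sufficiency. What needs checking is that the monotonicity of the distance ratio really holds under the stated $\gamma$-conditions. I would try to prove it first by writing $|x-e^{j\psi}|^2=x^2-2x\cos\psi+1$ and differentiating the ratio with respect to $\cos\psi$. The second point to verify is which sign pattern of coefficients is needed at each of the two endpoints, so that both $p_i$ and $q_i$ constraints are required.
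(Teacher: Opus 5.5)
\textbf{Verdict: genuine gap.} Your setup is right, and it is essentially the decomposition the paper itself uses. Since $|D(j\omega)|=1$, we have $|1-\gamma_iD|^2/|\tfrac{1}{\gamma_{i-1}}-D|^2=\bigl(4(1-\gamma_i)^2+t_d^2\omega^2(1+\gamma_i)^2\bigr)/\bigl(4(\tfrac{1}{\gamma_{i-1}}-1)^2+t_d^2\omega^2(\tfrac{1}{\gamma_{i-1}}+1)^2\bigr)$. The pairing of $(1\pm\gamma_i)$ with $(\tfrac{1}{\gamma_{i-1}}\pm1)$ in \eqref{LL:37} is exactly your split into $c=p_i$ and $c=q_i$. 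The monotonicity you planned to check is automatic, because both squared moduli are affine in $\cos\psi$. Your prediction for the $\omega^4$ coefficient is also correct: it reproduces the last two lines of \eqref{LL:34}.

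The step that fails is ``the remaining coefficients should be handled by the $\alpha$-condition and the $\gamma$-condition.'' Expand
\[
\bigl|(1+\tfrac{t_d}{2}j\omega)\bigl(D(j\omega)(j\omega)^2+bj\omega+a\bigr)\bigr|^2-c^2\bigl(1+\tfrac{t_d^2}{4}\omega^2\bigr)\bigl|(j\omega)^2+\beta_ij\omega+\alpha_i\bigr|^2 .
\]
Its $\omega^6$ coefficient is $\tfrac{t_d^2}{4}(1-c^2)$, which the $\gamma$-lines do handle. Its constant term is $a^2-c^2\alpha_i^2$. Its $\omega^2$ coefficient, however, is
\[
b^2-2a-c^2(\beta_i^2-2\alpha_i)+\tfrac{t_d^2}{4}\bigl(a^2-c^2\alpha_i^2\bigr),
\]
and no line of \eqref{LL:34} controls it. For $c=q_i$ and $t_d=0$ it is precisely the $\beta$-condition in the third line of \eqref{LL:18}. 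There is a second, smaller gap: the constant term for $c=p_i$ requires $\alpha_{i-1}\le\frac{1+\gamma_{i-1}}{1+\gamma_i}\alpha_i$. The $\alpha$-condition of \eqref{LL:34} implies this only when $\gamma_i\le\gamma_{i-1}$.

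This gap cannot be closed, because the statement is false as written. The paper's proof has the same omission: \eqref{LL:37} contains the top, bottom and $\omega^4$ coefficients of your two brackets, but not the $\omega^2$ ones.

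For a counterexample, take $\gamma_{i-1}=\gamma_i=\tfrac12$, $\alpha_{i-1}=\alpha_i=1$, $\beta_{i-1}=1$, $\beta_i=0.1$, $t_d=0.1$. Then $p_i=q_i=2$, and the first two lines of \eqref{LL:34} hold (the second with equality). Both delay lines evaluate to $0.01\cdot 15.96+1.6-12<0$. But the first factor of $G_{i-1,i}$ is identically $1$, so at $\omega=1$
\[
|G_{i-1,i}(j)|=\frac{|1+j-\tfrac12 D(j)|}{|0.1\,j|}\ge\frac{\sqrt2-\tfrac12}{0.1}>9 .
\]
Consistently, the $\omega^2$ coefficient above equals $7.96>0$ here.

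There is also a structural reason the hypotheses cannot suffice. For any gains with $p_i,q_i>1$ satisfying the first two lines, the delay lines hold for all sufficiently small $t_d$, whatever $\beta_{i-1},\beta_i$ are. Yet as $t_d\to0$, $G_{i-1,i}$ tends to \eqref{LL:20}, which is not string stable for arbitrary $\beta$'s.

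Your argument does go through if you add, for both $c\in\{p_i,q_i\}$, the hypotheses
\[
\alpha_{i-1}\le c\,\gamma_{i-1}\alpha_i \quad\text{and}\quad b^2-2a+\tfrac{t_d^2}{4}(a^2-c^2\alpha_i^2)\le c^2(\beta_i^2-2\alpha_i).
\]
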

\begin{IEEEproof}
According to the error dynamics \eqref{LL:33}, we can obtain the transfer function from $a_{i-1}$ to $e_{i,1}$ and $e_{i-1,1}$ as
\begin{equation}\label{LL:35}
\begin{aligned}
\frac{{{{\hat e}_{i,1}}\left( s \right)}}{{{{\hat a}_{i - 1}}\left( s \right)}} &= \frac{{1 - {\gamma _i}\frac{{2 - {t_d}s}}{{2 + {t_d}s}}}}{{{s^2} + {\alpha _i} + {\beta _i}s}},\\
\frac{{{{\hat e}_{i - 1,1}}\left( s \right)}}{{{{\hat a}_{i - 1}}\left( s \right)}} &= \frac{{\frac{{2 + {t_d}s}}{{2 - {t_d}s}}\frac{1}{{{\gamma _{i - 1}}}} - 1}}{{{s^2} + \left( {{\alpha _{i - 1}} + {\beta _{i - 1}}s} \right)\frac{{2 + {t_d}s}}{{2 - {t_d}s}}\frac{1}{{{\gamma _{i - 1}}}}}}.
\end{aligned}
\end{equation}
Therefore, the complementary sensitivity function in \eqref{LL:09} is
\begin{equation}\label{LL:36}
\begin{aligned}
\left|G_{i-1,i}(s)\right|&= \! \left| {\frac{{{{\hat e}_{i,1}}\left( s \right)}}{{{{\hat e}_{i - 1,1}}\left( s \right)}}} \right| \\
&=\left| {\frac{{1 \!-\! {\gamma _i}\frac{{2 - {t_d}s}}{{2 + {t_d}s}}}}{{\frac{{2 + {t_d}s}}{{2 - {t_d}s}}\frac{1}{{{\gamma _{i \!-\! 1}}}} \!-\! 1}}\frac{{{s^2} \!+\! \left( {{\alpha _{i \!-\! 1}} \!+\! {\beta _{i \!-\! 1}}s} \right)\frac{{2 + {t_d}s}}{{2 - {t_d}s}}\frac{1}{{{\gamma _{i \!-\! 1}}}}}}{{{s^2} \!+\! {\alpha _i} \!+\! {\beta _i}s}}} \right|,\\
&=\left| {\frac{{{b_0}{s^4} + {b_1}{s^3} + {b_2}{s^2} + {b_3}s + {b_4}}}{{{c_0}{s^4} + {c_1}{s^3} + {c_2}{s^2} + {c_3}s + {c_4}}}} \right|
\end{aligned}
\end{equation}
with the coefficients being
$b_0= - t_d^2\left( {1 + {\gamma _i}} \right)$,\\
$b_1={ - 2{t_d}\left( {1 - {\gamma _i}} \right) + {t_d}\left( {1 + {\gamma _i}} \right)\left( {2 + {t_d}\frac{{{\beta _{i - 1}}}}{{{\gamma _{i - 1}}}}} \right)}$,\\
$b_2={2\left( {1 - {\gamma _i}} \right)\left( {2 + {t_d}\frac{{{\beta _{i - 1}}}}{{{\gamma _{i - 1}}}}} \right) + \frac{{\left( {{t_d}{\alpha _{i - 1}} + 2{\beta _{i - 1}}} \right)}}{{{\gamma _{i - 1}}}}{t_d}\left( {1 + {\gamma _i}} \right)}$,\\
$b_3={2\left( {1 - {\gamma _i}} \right)\frac{{\left( {{t_d}{\alpha _{i - 1}} + 2{\beta _{i - 1}}} \right)}}{{{\gamma _{i - 1}}}} + 2\frac{{{\alpha _{i - 1}}}}{{{\gamma _{i - 1}}}}{t_d}\left( {1 + {\gamma _i}} \right)}$,\\
$b_4=4\frac{{{\alpha _{i - 1}}}}{{{\gamma _{i - 1}}}}\left( {1 - {\gamma _i}} \right)$; and
$c_0= - t_d^2\left( {\frac{1}{{{\gamma _{i - 1}}}} + 1} \right)$,\\
$c_1={ - 2{t_d}\left( {\frac{1}{{{\gamma _{i - 1}}}} - 1} \right) + {t_d}\left( {\frac{1}{{{\gamma _{i - 1}}}} + 1} \right)\left( {2 - {t_d}{\beta _i}} \right)}$,\\
$c_2={2\left( {\frac{1}{{{\gamma _{i - 1}}}} - 1} \right)\left( {2 - {t_d}{\beta _i}} \right) + {t_d}\left( {\frac{1}{{{\gamma _{i - 1}}}} + 1} \right)\left( {2{\beta _i} - {\alpha _i}{t_d}} \right)}$,\\
$c_3={2\left( {\frac{1}{{{\gamma _{i - 1}}}} - 1} \right)\left( {2{\beta _i} - {\alpha _i}{t_d}} \right) + 2{\alpha _i}{t_d}\left( {\frac{1}{{{\gamma _{i - 1}}}} + 1} \right)}$,\\
$c_4=4{\alpha _i}\left( {\frac{1}{{{\gamma _{i - 1}}}} - 1} \right)$.\\
The $H_\infty$ norm of \eqref{LL:36} results in an eighth-order polynomial in the variable $\omega$. Sufficient conditions can be obtained by making the coefficients of the polynomial negative, yielding
\begin{equation}\label{LL:37}
\begin{aligned}
&t_d^4{\left( {1 + {\gamma _i}} \right)^2} \le t_d^4{\left( {\frac{1}{{{\gamma _{i - 1}}}} + 1} \right)^2},\\
&16{\left( {1 - {\gamma _i}} \right)^2}\frac{{\alpha _{i - 1}^2}}{{\gamma _{i - 1}^2}} \le 16{\left( {\frac{1}{{{\gamma _{i - 1}}}} - 1} \right)^2}\alpha _i^2,\\
&{{{\left( {1 + {\gamma _i}} \right)}^2}\frac{{\alpha _{i - 1}^2}}{{\gamma _{i - 1}^2}} \le {{\left( {\frac{1}{{{\gamma _{i - 1}}}} + 1} \right)}^2}\alpha _i^2},\\
&{\left( {1 - {\gamma _i}} \right)^2} \le {\left( {\frac{1}{{{\gamma _{i - 1}}}} - 1} \right)^2},\\
%\end{aligned}
%\end{equation}
%\begin{equation}\label{LL:37}
%\begin{aligned}
 &t_d^2\left( {{{\left( {1 \!+\! {\gamma _i}} \right)}^2}\left( {2\frac{{{\alpha _{i - 1}}}}{{{\gamma _{i - 1}}}} \!+\! \frac{{\beta _{i - 1}^2}}{{\gamma _{i - 1}^2}}} \right) \!+\! {{\left( {\frac{1}{{{\gamma _{i - 1}}}} \!+\! 1} \right)}^2}\left( {2{\alpha _i} \!-\! \beta _i^2} \right)} \right) \\
&+ 8{t_d}{\left( {1 \!+\! {\gamma _i}} \right)^2}\frac{{{\beta _{i - 1}}}}{{{\gamma _{i - 1}}}} \!+\! 4\left( {{{\left( {1 \!+\! {\gamma _i}} \right)}^2} \!-\! {{\left( {\frac{1}{{{\gamma _{i - 1}}}} \!+\! 1} \right)}^2}} \right) \le 0,\\
&t_d^2\left( {{{\left( {1 \!-\! {\gamma _i}} \right)}^2}\left( {2\frac{{{\alpha _{i - 1}}}}{{{\gamma _{i - 1}}}} \!+\! \frac{{\beta _{i - 1}^2}}{{\gamma _{i - 1}^2}}} \right) \!+\! {{\left( {\frac{1}{{{\gamma _{i - 1}}}} \!-\! 1} \right)}^2}\left( {2{\alpha _i} \!-\! \beta _i^2} \right)} \right) \\
&+ 8{t_d}{\left( {1 \!-\! {\gamma _i}} \right)^2}\frac{{{\beta _{i - 1}}}}{{{\gamma _{i - 1}}}} \!+\! 4\left({\left( {1 - {\gamma _i}} \right)^2} \!-\! {\left( {\frac{1}{{{\gamma _{i - 1}}}}\! -\! 1} \right)^2} \right)\le 0.
\end{aligned}
\end{equation}
From \eqref{LL:37}, the conditions \eqref{LL:34} can be obtained after suitable rearrangements, which completes the proof.
\end{IEEEproof}
\begin{remark}
[Upper bound of time-delay] From \eqref{LL:37}, one can see that a large $\beta_i$  provides string stability for large delays. On the other hand, a large $\beta_i$ may increase possible noise entering the error derivative. Thus, it is practical to design the control gains in \eqref{LL:34} based on a maximum delay, e.g. considering V2V delays in the typical range $t_d\in [0.02, 0.1]s$, \cite{new4}. % and that the Pade approximation is reasonable in this range.
%With the given range of $t_d$, one can make the the longitudinal controller \eqref{LL:29} robust via the control gains requirements \eqref{LL:34}.
\end{remark}

\section{Lateral control}\label{sec:04}
We show that the non-identical approach is not necessary for lateral string stability, i.e. identical gains are enough to guarantee SFSS in the lateral direction. Thus, we adopt %the identical %feedforward and feedback
%protocol
\begin{equation}\label{LL:39}
{\omega_i}(t) = k_3e_{i,3}(t)+k_4e_{i,4}(t) + {\mu }{\omega_{i - 1}(t)},
\end{equation}
where $k_3, k_4$ and $\mu$ are control gains to be designed. %In the following, we will show that the non-identical approach is not necessary for lateral string stability. % of lateral dynamics.
\begin{remark}
[Cooperative lateral control] Different from the decentralized approach of \cite{18Khatir}, %cooperative control allows the predecessor to communicate with its follower by wireless communication. Therefore,
\eqref{LL:39} considers cooperative control also in the lateral direction.
\end{remark}
\subsection{Proposed cooperative lateral control}
After substituting the control law \eqref{LL:39} into \eqref{LL:03b}, the resulting error dynamics are
\begin{equation}\label{LL:40}
\begin{bmatrix}{\dot e}_{i,3}(t) \\ {\dot e}_{i,4}(t) \end{bmatrix}=\begin{bmatrix}-k_3 & \frac{v^*}{R^*}-k_4\\-k_3& -k_4 \end{bmatrix}\begin{bmatrix} e_{i,3}(t) \\ e_{i,4}(t)\end{bmatrix}+\begin{bmatrix} -\mu \\ 1-\mu\end{bmatrix}\omega_{i-1}(t).
\end{equation}
Consider the Laplace transform of \eqref{LL:40} resulting in
\begin{equation}\label{LL:41}
\begin{aligned}
\frac{{{{\hat e}_{i,3}(s)}}}{{{{\hat \omega }_{i - 1}(s)}}} = \frac{{ - \mu s + \frac{{{v^*}}}{{{R^*}}}(1 - \mu ) - {k_4}}}{{{s^2} + \left( {{k_3} + {k_4}} \right)s + \frac{{{v^*}}}{{{R^*}}}{k_3}}},\\
\frac{{{{\hat e}_{i,4}(s)}}}{{{{\hat \omega }_{i - 1}(s)}}} = \frac{{(1 - \mu )s + {k_3}}}{{{s^2} + \left( {{k_3} + {k_4}} \right)s + \frac{{{v^*}}}{{{R^*}}}{k_3}}}.
\end{aligned}
\end{equation}
According to the Routh-Hurwitz criterion, the second-order error dynamics \eqref{LL:40} are asymptotically stable if and only if
\begin{equation}\label{LL:42}
{k_3} > 0, \ {k_4} > 0.
\end{equation}
In addition to asymptotic stability, lateral string stability is also to be ensured, leading to the following result.

\begin{theorem}\label{thm3} [Lateral stability and string stability] Consider the lateral dynamics in \eqref{LL:03b} with the error states in \eqref{LL:07a}. The lateral controller \eqref{LL:39} with identical control gains ensures asymptotic stability when the control gains are all positive. In addition, lateral SFSS can be guaranteed if the following conditions are satisfied
\begin{equation} \label{LL:43}
\left\{ \begin{array}{l}
0< \mu  \le 1,\\
{k_3} + 2{k_4} \geq 2\frac{{{v^*}}}{{{R^*}}}\left( {1 - \mu } \right).
\end{array} \right.
\end{equation}
\end{theorem}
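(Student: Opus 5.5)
The plan is to mirror the proof of Theorem~\ref{thm1}: compute the complementary sensitivity function $G_{i-1,i}(s)$ in closed form from \eqref{LL:41}, then reduce the SFSS requirement \eqref{LL:09} to a polynomial inequality in $\omega^2$. Asymptotic stability needs no new work. The characteristic polynomial of \eqref{LL:40} is $s^2+(k_3+k_4)s+\frac{v^*}{R^*}k_3$, so by Routh--Hurwitz it is Hurwitz under \eqref{LL:42}, which holds when the gains are positive. Throughout I write $a:=v^*/R^*$ and $D(s):=s^2+(k_3+k_4)s+ak_3$.

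\textbf{Step 1: reduce $G_{i-1,i}$ to a vehicle-to-vehicle map.} Since the gains are identical, the maps $\hat\omega_{i-1}\mapsto \hat e_{i,j}$ in \eqref{LL:41} are the same for every $i$, for both $j=3$ and $j=4$. Factoring as in \eqref{LL:20},
\[
\frac{\hat e_{i,j}}{\hat e_{i-1,j}}=\frac{\hat e_{i,j}}{\hat\omega_{i-1}}\,\frac{\hat\omega_{i-1}}{\hat\omega_{i-2}}\,\frac{\hat\omega_{i-2}}{\hat e_{i-1,j}} .
\]
The first and third factors cancel, so $G_{i-1,i}(s)=T(s):=\hat\omega_i(s)/\hat\omega_{i-1}(s)$, and this holds simultaneously for $j=3,4$.

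\textbf{Step 2: compute $T(s)$.} Substituting \eqref{LL:41} into the Laplace transform of \eqref{LL:39}, i.e. $\hat\omega_i=k_3\hat e_{i,3}+k_4\hat e_{i,4}+\mu\hat\omega_{i-1}$, and putting everything over $D(s)$, I expect the cross terms $\pm k_3k_4$ and $\pm\mu k_3 s$ to cancel. This should give
\[
T(s)=\frac{\mu s^2+k_4 s+ak_3}{s^2+(k_3+k_4)s+ak_3}.
\]
Note that $T(0)=1$, consistent with the discussion after Definition~\ref{def2}.

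\textbf{Step 3: impose $|T(j\omega)|\le 1$.} I square the modulus and compare numerator and denominator, requiring $(ak_3-\mu\omega^2)^2+k_4^2\omega^2\le(ak_3-\omega^2)^2+(k_3+k_4)^2\omega^2$ for all $\omega$. The constant terms $a^2k_3^2$ cancel, which leaves
\[
(\mu^2-1)\,\omega^4+k_3\bigl[\,2a(1-\mu)-k_3-2k_4\,\bigr]\,\omega^2\le 0,\quad\forall\omega .
\]
As in Theorem~\ref{thm1}, it is sufficient that both coefficients be nonpositive. The first gives $|\mu|\le 1$, which together with positivity gives $0<\mu\le 1$. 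Since $k_3>0$, the second is exactly $k_3+2k_4\ge 2a(1-\mu)$. These are precisely the conditions in \eqref{LL:43}.

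The only delicate step is the algebra in Step 2, specifically checking that the $s$-coefficient of the numerator collapses to $k_4$ and the constant term to $ak_3$. That cancellation is what removes the $\omega^0$ term and makes the conditions clean. Unlike the longitudinal case, $\mu<1$ is allowed here without degeneracy, because the numerator and denominator differ in their $s$-coefficients. This is why identical gains suffice in the lateral direction.
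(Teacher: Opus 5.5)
Your proposal is correct and follows essentially the same route as the paper: both reduce $G_{i-1,i}$ to $\hat\omega_i/\hat\omega_{i-1}=\frac{\mu s^2+k_4 s+\frac{v^*}{R^*}k_3}{s^2+(k_3+k_4)s+\frac{v^*}{R^*}k_3}$, which the paper obtains via $\hat e_{i,j}/\hat\omega_i$ in \eqref{LL:44} and you via the telescoping factorization. Both then require the coefficients of the resulting polynomial (your quartic is \eqref{LL:47} multiplied by $\omega^2$) to be nonpositive; your Step 2 cancellation checks out, and here those coefficient conditions are in fact also necessary, since $c_2x^2+c_1x\le 0$ for all $x=\omega^2\ge 0$ forces $c_2\le 0$ and $c_1\le 0$.
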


\begin{IEEEproof}
To satisfy \eqref{LL:42}, calculate % being satisfied, straightforward manipulations give %, there is
\begin{equation}\label{LL:44}
\begin{aligned}
\frac{{{{\hat e}_{i,3}(s)}}}{{{{\hat \omega }_i}(s)}} &= \frac{{ - \mu s - {k_4} + \frac{{{v^*}}}{{{R^*}}}\left( {1 - \mu } \right)}}{{\mu {s^2} + {k_4}s + \frac{{{v^*}}}{{{R^*}}}{k_3}}},\\
\frac{{{{\hat e}_{i,4}(s)}}}{{{{\hat \omega }_i}(s)}} &= \frac{{\left( {1 - \mu } \right)s + {k_3}}}{{\mu {s^2} + {k_4}s + \frac{{{v^*}}}{{{R^*}}}{k_3}}}.
\end{aligned}
\end{equation}
Referring to the complementary sensitivity function notation in \eqref{LL:09}, combine \eqref{LL:41} and \eqref{LL:44}, for $i=2,\dots,n$,
\begin{align}\label{LL:45}
\left|G_{i-1,i}(s)\right|=& \left|  {\frac{{{{\hat e}_{i ,3}(s)}}}{{{{\hat e}_{i-1,3}(s)}}}} \right| =  \left| {\frac{{{{\hat e}_{i ,4}}(s)}}{{{{\hat e}_{i-1,4}(s)}}}}  \right| \nonumber \\ =&\left| \frac{{\mu {s^2} + {k_4}s + \frac{{{v^*}}}{{{R^*}}}{k_3}}}{{{s^2} + \left( {{k_3} + {k_4}} \right)s + \frac{{{v^*}}}{{{R^*}}}{k_3}}}\right|.
\end{align}
The condition \eqref{LL:09} for SFSS can be written as
\begin{equation}\label{LL:46}
\left| {\frac{{ - \mu {\omega ^2} + \frac{{{v^*}}}{{{R^*}}}{k_3} + {k_4}\omega j}}{{ - {\omega ^2} + \frac{{{v^*}}}{{{R^*}}}{k_3} + \left( {{k_3} + {k_4}} \right)\omega j}}} \right| \leq 1,\forall \omega.
\end{equation}
which leads to a polynomial in the variable $\omega$ of the form
\begin{equation}\label{LL:47}
\left( { {\mu ^2}-1} \right){\omega ^2} +  {2\frac{{{v^*}}}{{{R^*}}}{k_3}\left( { 1-\mu } \right) - k_3^2 - 2{k_3}{k_4}} \leq 0,\forall \omega.
\end{equation}
Using the determinant conditions, we obtain the conditions \eqref{LL:43} on the control gains. This concludes the proof.
\end{IEEEproof}

\begin{remark}
[Identical lateral control] \textit{Theorem \ref{thm3}} achieves SFSS in the lateral direction, a stronger result than the eventual string stability in \cite{18Khatir}. Different from \textit{Theorem \ref{thm1}} for longitudinal control, \textit{Theorem \ref{thm3}} shows that identical control gains are sufficient for guaranteeing lateral string stability. This is due to the different form of the lateral dynamics as compared to the longitudinal dynamics, leading to a different  $G_{i-1,1}(s)$ (compare \eqref{LL:45} with \eqref{LL:20}). As a matter of fact, \cite{18Khatir} already noticed that the dynamics of $e_{i,3}$ and $e_{i,4}$ %in \eqref{LL:03b}
makes it complex to study lateral string stability using non-identical gains.
\end{remark}

\subsection{Robustness to V2V communication delay}
Similar to the longitudinal case, consider a communication delay $t_d$ also in the lateral case, via the term $\omega_{i-1}$, i.e.
\begin{equation}\label{LL:48}
{\omega _i}\left( t \right) = {k_3}{e_{i,3}}\left( t \right) + {k_4}{e_{i,4}}\left( t \right) + \mu {\omega _{i - 1}}\left( {t - {t_d}} \right).
\end{equation}
As in \eqref{LL:31}, introduce an auxiliary variable $\bar{y}_{i-1}$ with dynamics
\begin{equation}\label{LL:49}
 {{\dot {\bar y}}_{i - 1}}\left( t \right) = -\frac{2}{t_d}{{\bar y}_{i - 1}}\left( t \right)  +\frac{4}{{t_d}}{\omega_{i - 1}}\left( t \right).
\end{equation}
giving the transformed lateral error dynamics
\begin{equation}\label{LL:50}
\left[ { \setlength{\arraycolsep}{1.5pt}\begin{array}{*{20}{c}}
{{{\dot e}_{i,3}}\left( t \right)}\\
{{{\dot e}_{i,4}}\left( t \right)}\\
{{{\dot {\bar y}}_{i - 1}}\!\left( t \right)}
\end{array}} \!\right] \!\!=\!\! \left[  \setlength{\arraycolsep}{1.2pt}{\begin{array}{*{20}{c}}
-k_3&\ \frac{v^*}{R^*}\!-\!k_4&-\mu\\
 -k_3&-k_4&-\mu\\
0&0& -{\frac{{2 }}{{{t_d}}}}
\end{array}} \right]\!\!\!\left[  \setlength{\arraycolsep}{1pt}{\begin{array}{*{20}{c}}
{{e_{i,3}}\left( t \right)}\\
{{e_{i,4}}\left( t \right)}\\
{{{\bar y}_{i - 1}}\!\left( t \right)}
\end{array}}\! \right] \!+ \!\left[  \setlength{\arraycolsep}{1pt}{\begin{array}{*{20}{c}}
\mu\\
{{1\!+\!\mu}}\\
{ \frac{4}{t_d}}
\end{array}} \right]\!{\omega_{i - 1}}\!\left( t \right)
\end{equation}
With $\omega_{i-1}(t)=0$, stability of lateral dynamics \eqref{LL:50} holds when $k_3, k_4>0$. Robustness of lateral string stability is derived via the following result.
\begin{theorem}\label{thm4}
[Robustness of lateral string stability to time-delay] Consider the lateral error dynamics in \eqref{LL:50} and the cooperative  control law \eqref{LL:48} with constant time-delay $t_d$. Lateral SFSS is ensured when the control gains satisfy
\begin{equation}\label{LL:51}
\left\{\begin{aligned}
 &0 < \mu  \le 1,\\
 &{k_3} + 2{k_4} \ge 2\frac{{{v^*}}}{{{R^*}}}\left( {1 - \mu } \right),\\
 &t_d^2{k_3}\frac{{{v^*}}}{{{R^*}}}\left( {\mu  \!+\! 1} \right)\! +\! {t_d}{k_4}\left( {4\mu  \!-\! {k_3}} \right) \!+\! 2\left( {{\mu ^2} \!-\! 1} \right) \!\le 0.\\
\end{aligned}\right.
\end{equation}
\end{theorem}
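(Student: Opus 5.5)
I would mirror the proof of Theorem \ref{thm2}, working with the Pad\'e-approximated lateral dynamics \eqref{LL:50}. Throughout I write $\rho = v^*/R^*$ and $P(s)=\frac{2-t_ds}{2+t_ds}$.

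\textbf{Step 1: delayed transfer functions.} Take the Laplace transform of the closed loop \eqref{LL:03b} under \eqref{LL:48}, with the delayed term replaced by $\mu P(s)\hat\omega_{i-1}(s)$. This repeats the computation leading to \eqref{LL:41}, except that $\mu$ in the input vector becomes $\mu P(s)$. The result is
\begin{equation*}
\frac{\hat e_{i,4}(s)}{\hat \omega_{i-1}(s)}=\frac{(1-\mu P(s))s+k_3}{s^2+(k_3+k_4)s+\rho k_3},
\end{equation*}
together with the analogous expression for $\hat e_{i,3}$.

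\textbf{Step 2: express $\hat e_{i-1,4}$ through $\hat\omega_{i-1}$.} Invert the control law of vehicle $i-1$, as was done to obtain \eqref{LL:44}. This gives $\hat e_{i-1,4}/\hat\omega_{i-1}$ as a rational function containing $P(s)^{-1}$. Taking the ratio $\hat e_{i,4}/\hat e_{i-1,4}$ and clearing the Pad\'e denominators yields $|G_{i-1,i}(j\omega)|$ as a quotient of low-order polynomials in $s$. In the delay-free limit $t_d\to 0$ this quotient must reduce to \eqref{LL:45}, which serves as a consistency check. Because of the all-pass structure, I expect the common factor in $\hat e_{i,3}$ and $\hat e_{i,4}$ to cancel again. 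Then the lateral string stability transfer function is the same for both error components, as in \eqref{LL:45}.

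\textbf{Step 3: reduce SFSS to a polynomial inequality.} Write $|N(j\omega)|^2-|D(j\omega)|^2\le 0$ for all $\omega$. This is a polynomial in $\omega^2$, of degree about three. As in the proofs of Theorem \ref{thm1} and Theorem \ref{thm2}, I would use the sufficient condition that every coefficient is nonpositive:
\begin{itemize}
\item The constant term is expected to vanish, since $G_{i-1,i}(0)=1$.
\item The coefficient of $\omega^2$ should be a $t_d$-independent expression reproducing \eqref{LL:47}. It gives $k_3+2k_4\ge 2\rho(1-\mu)$.
\item The leading coefficient should carry the factor $\mu^2-1$, since $P$ is all-pass with $|P(j\omega)|=1$. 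It gives $0<\mu\le1$.
\item The middle coefficient, multiplied by $t_d^2$ and collected, should give the third inequality in \eqref{LL:51}: $t_d^2 k_3\rho(\mu+1)+t_dk_4(4\mu-k_3)+2(\mu^2-1)\le 0$.
\end{itemize}
Positivity of $k_3$ and $k_4$ was already established for internal stability of \eqref{LL:50}.

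\textbf{Main obstacle.} The difficulty is the algebra in Step 3: expanding $|N|^2-|D|^2$ with the Pad\'e factors and organising the coefficients by powers of $t_d$. In particular, one must check that the terms involving $t_d$ in the $\omega^2$ coefficient either vanish or are dominated, so that this coefficient reduces exactly to the delay-free condition. I would do this bookkeeping by multiplying numerator and denominator by $(2+t_ds)(2-t_ds)$ first, as in \eqref{LL:36}, and then matching coefficients order by order.
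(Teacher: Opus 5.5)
Your plan follows the paper's route exactly. You substitute the Pad\'e approximation, invert vehicle $i-1$'s control law, and cancel the common factor, which in your notation gives $G_{i-1,i}(s)=\frac{\mu P(s)s^2+k_4s+\rho k_3}{s^2+(k_3+k_4)s+\rho k_3}$ for both $e_{i,3}$ and $e_{i,4}$. Then you require the coefficients of $|N(j\omega)|^2-|D(j\omega)|^2$ to be nonpositive, with $N,D$ the cubics of \eqref{LL:54}. Three of your four coefficient predictions are correct:
\begin{itemize}
\item the constant term vanishes;
\item the $\omega^6$ coefficient is $t_d^2(\mu^2-1)$;
\item in the $\omega^2$ coefficient the $t_d$-dependent terms cancel exactly, so no domination argument is needed. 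What remains is $8k_3\rho(1-\mu)-4k_3^2-8k_3k_4$, which reproduces the delay-free condition from \eqref{LL:47}.
\end{itemize}

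The step that fails is your last bullet. When you carry out the expansion, the $\omega^4$ coefficient is
\[
4(\mu^2-1)+8\mu t_dk_4+2t_d^2k_3\rho(\mu+1)-t_d^2k_3(k_3+2k_4).
\]
Half of it differs from the third line of \eqref{LL:51} only in that $-t_dk_3k_4$ is replaced by $-\tfrac12 t_d^2k_3(k_3+2k_4)$. The paper's \eqref{LL:55} contains the same slip, so matching coefficients cannot produce the stated inequality.

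The stated inequality is also not a valid substitute. Whenever $t_d<2k_4/(k_3+2k_4)$ it is strictly weaker than the correct one, and it is in fact insufficient. Take $\mu=0.99$, $k_3=5$, $k_4=1$, $\rho=0.1$, $t_d=0.1$:
\begin{itemize}
\item all three conditions in \eqref{LL:51} hold (the third expression is about $-0.134$);
\item the true $\omega^4$ coefficient is about $0.38$;
\item direct evaluation gives $|G_{i-1,i}(j30)|^2\approx 1.0035>1$.
\end{itemize}

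So the statement as written cannot be proved. Your argument, like the paper's, proves it only after the third condition is replaced by
\[
t_d^2k_3\rho(\mu+1)+4\mu t_dk_4-\tfrac12 t_d^2k_3(k_3+2k_4)+2(\mu^2-1)\le 0.
\]
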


\begin{figure}[t]
    \centering
    \setlength{\abovecaptionskip}{-0.1cm}
    \subfigure[Decentralized method in \cite{18Khatir}.]{%
    \label{figure3a}%
    \includegraphics[width=3.2in]{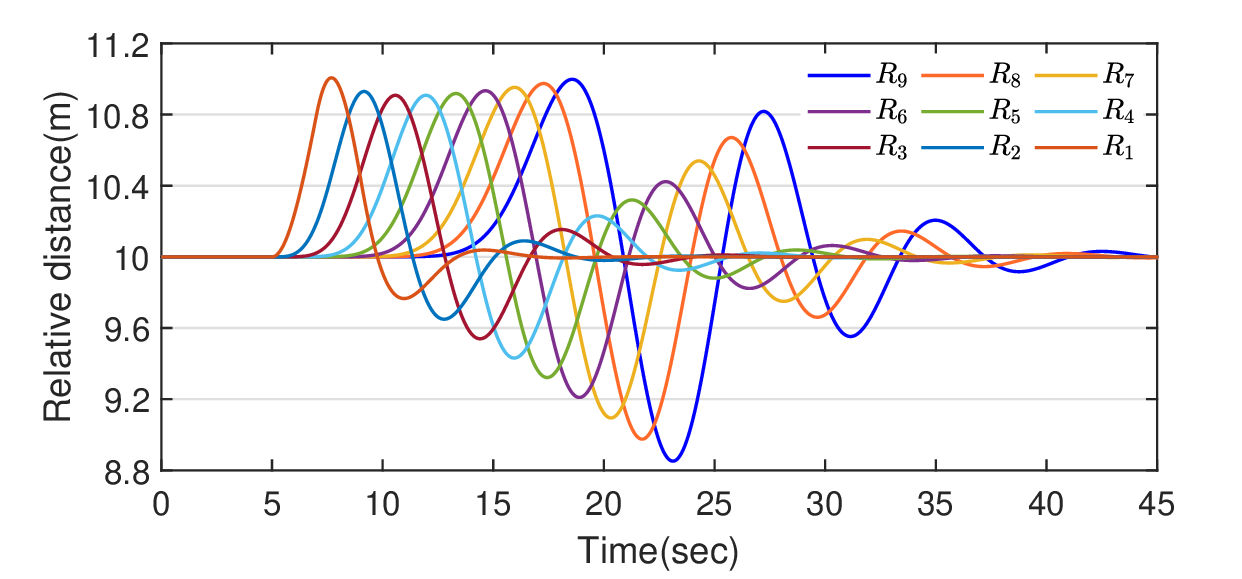}}
    %\hspace{10pt}%
    \subfigure[Proposed cooperative method.]{%
    \label{figure3b}%
    \includegraphics[width=3.2in]{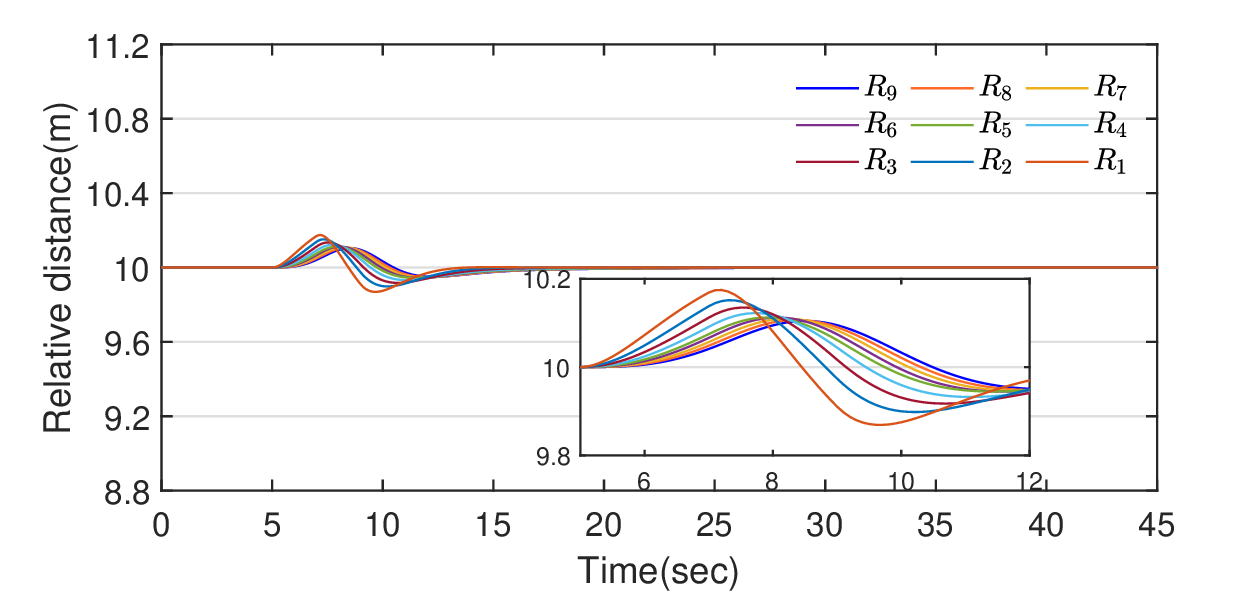}}
    \caption{\textcolor{black}{Longitudinal control: comparison of relative distances for the decentralized method in \cite{18Khatir} and the proposed cooperative method. The proposed method dramatically improves the inter-vehicle distance.} }%
    \label{distance}%
\end{figure}

\begin{figure}[t]
    \centering
    \setlength{\abovecaptionskip}{-0.1cm}
    \subfigure[Decentralized method in \cite{18Khatir}.]{%
    \label{figure4a}%
    \includegraphics[width=3.2in]{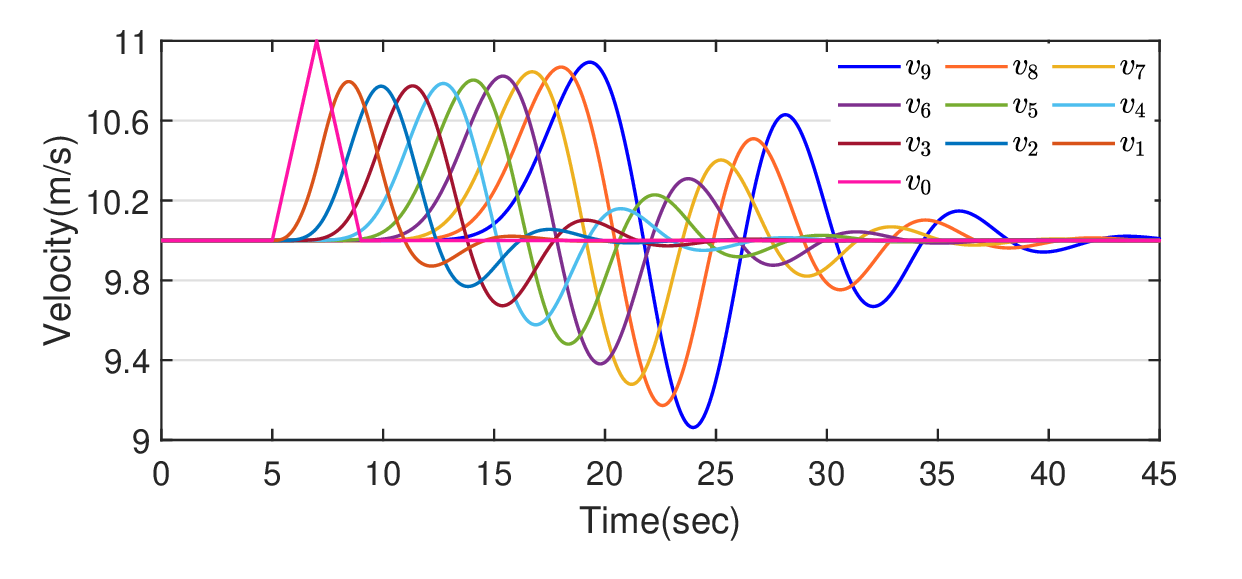}}
   % \hspace{10pt}%
    \subfigure[Proposed cooperative method.]{%
    \label{figure4b}%
    \includegraphics[width=3.2in]{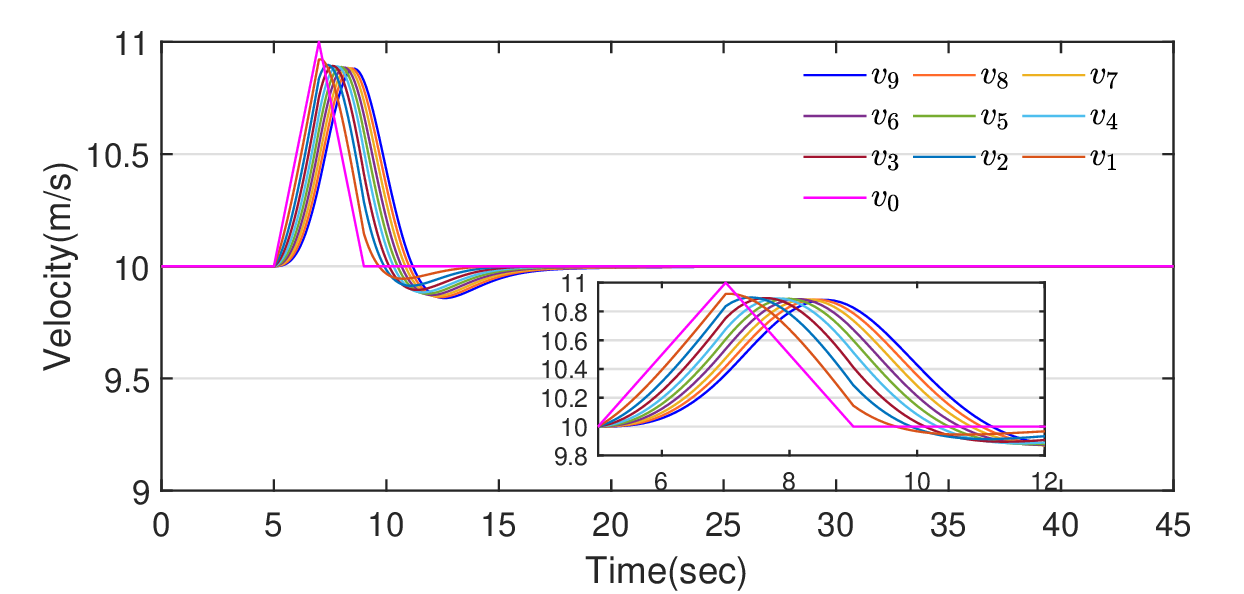}}
    \caption{\textcolor{black}{Longitudinal control: comparison of velocities for the decentralized method in \cite{18Khatir} and the proposed cooperative method. It can be seen that the proposed method dramatically improves the tracking of the leader's velocity. }}%
    \label{veelocity}%
\end{figure}

\begin{figure}[t]
    \centering
    \setlength{\abovecaptionskip}{-0.1cm}
    \subfigure[Distance error $e_{i,1}$ without communication delay.]{%
    \label{figure6a}%
    \includegraphics[width=3.2in]{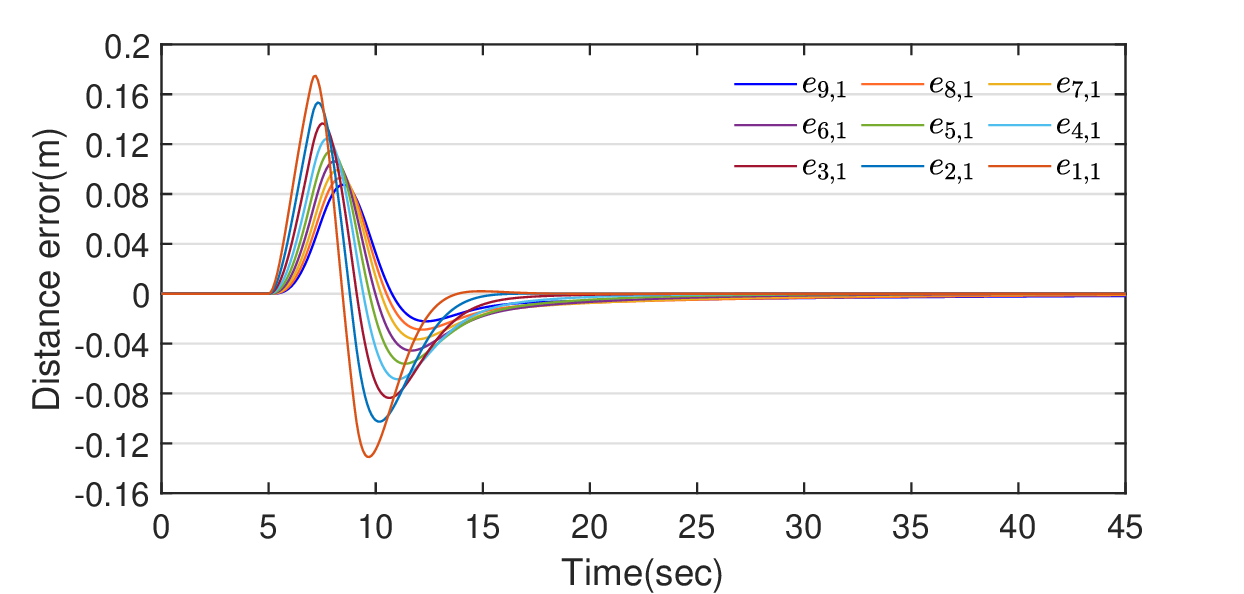}}
    \hspace{10pt}%
    \subfigure[Distance error $e_{i,1}$ with delay $t_d=0.1s$.]{%
    \label{figure9a}%
    \includegraphics[width=3.2in]{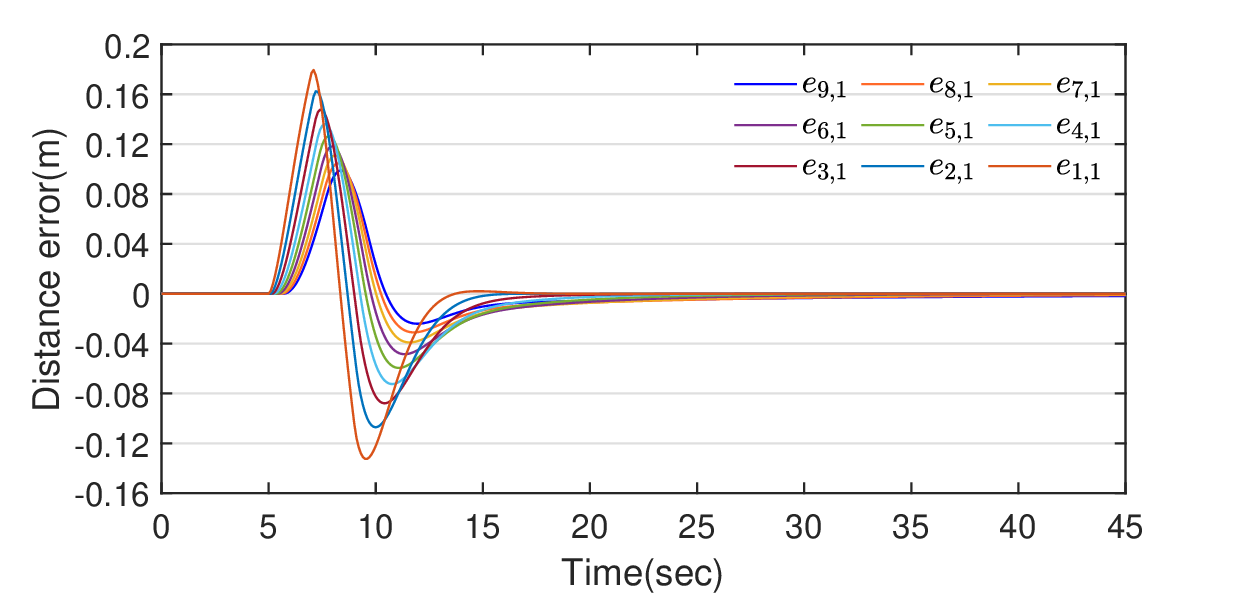}}
    \subfigure[Velocity error $e_{i,2}$ without communication delay .]{%
    \label{figure6b}%
    \includegraphics[width=3.2in]{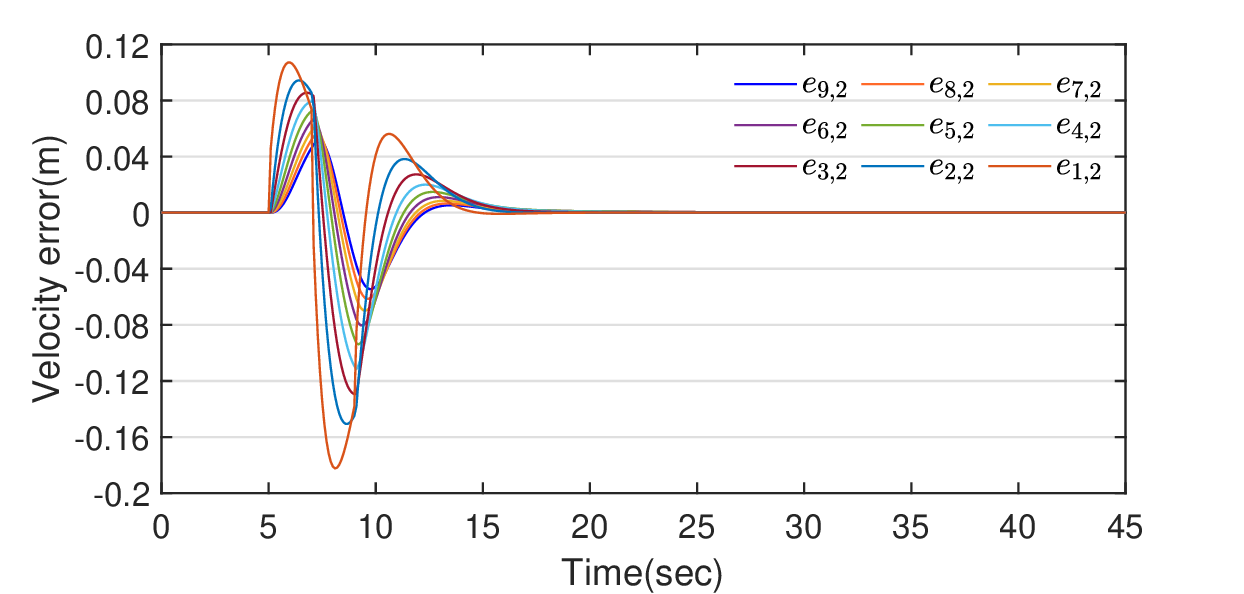}}
    \hspace{10pt}%
    \subfigure[Velocity error $e_{i,2}$ with delay $t_d=0.1s$.]{%
    \label{figure9b}%
    \includegraphics[width=3.2in]{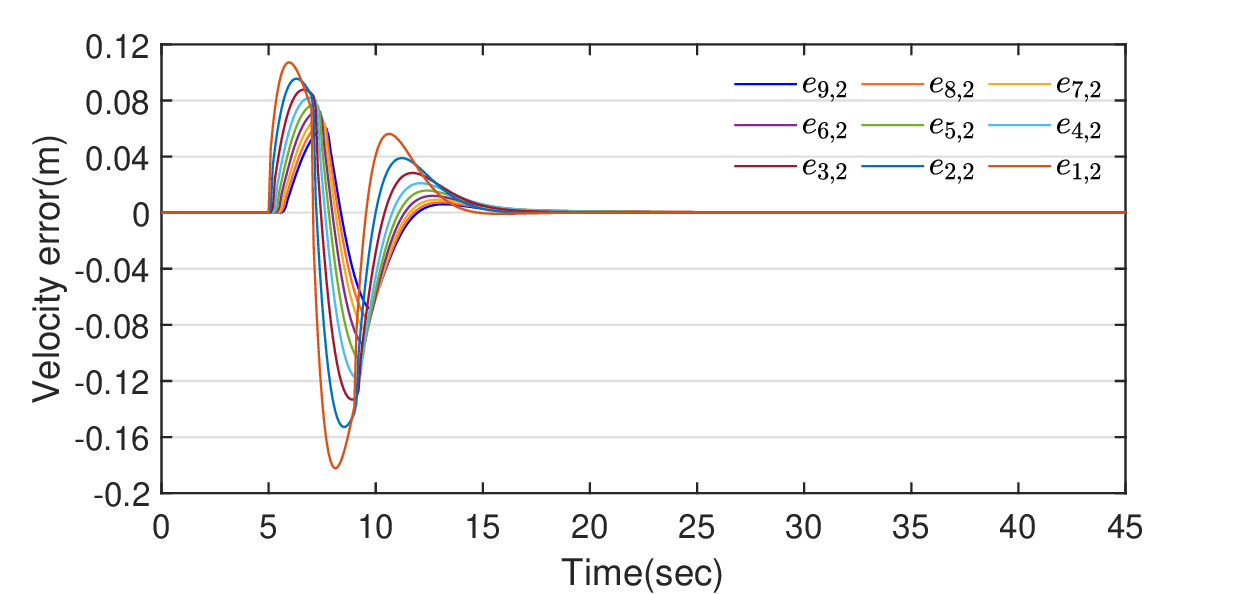}}
    \caption{\textcolor{black}{Longitudinal control: comparison of longitudinal errors  without communication delay and with delay $t_d=0.1$s for the proposed cooperative method. It can be seen that the proposed method is robust to such delay as there is no noticeable effect on the error signals.} }%
    \label{e1e2}%
\end{figure}

%\begin{figure}[tp]
%    \centering
%    \setlength{\abovecaptionskip}{-0.1cm}
%    \subfigure[Distance error $e_{i,1}$ with time-delay $t_d=0.1s$.]{%
%    \label{figure9a}%
%    \includegraphics[width=2.6in]{delaye1.eps}}
%    \hspace{10pt}%
%    \subfigure[Velocity error $e_{i,2}$ with time-delay $t_d=0.1s$.]{%
%    \label{figure9b}%
%    \includegraphics[width=2.6in]{delaye2.eps}}
%    \caption{Longitudinal errors for the vehicle platoons with time-delay.}%
%    \label{delaye1e2}%
%\end{figure}
\begin{figure}[t]
    \centering
    \setlength{\abovecaptionskip}{-0.1cm}
    \subfigure[No communication delay.]{%
    \label{figure5a}%
    \includegraphics[width=3.2in]{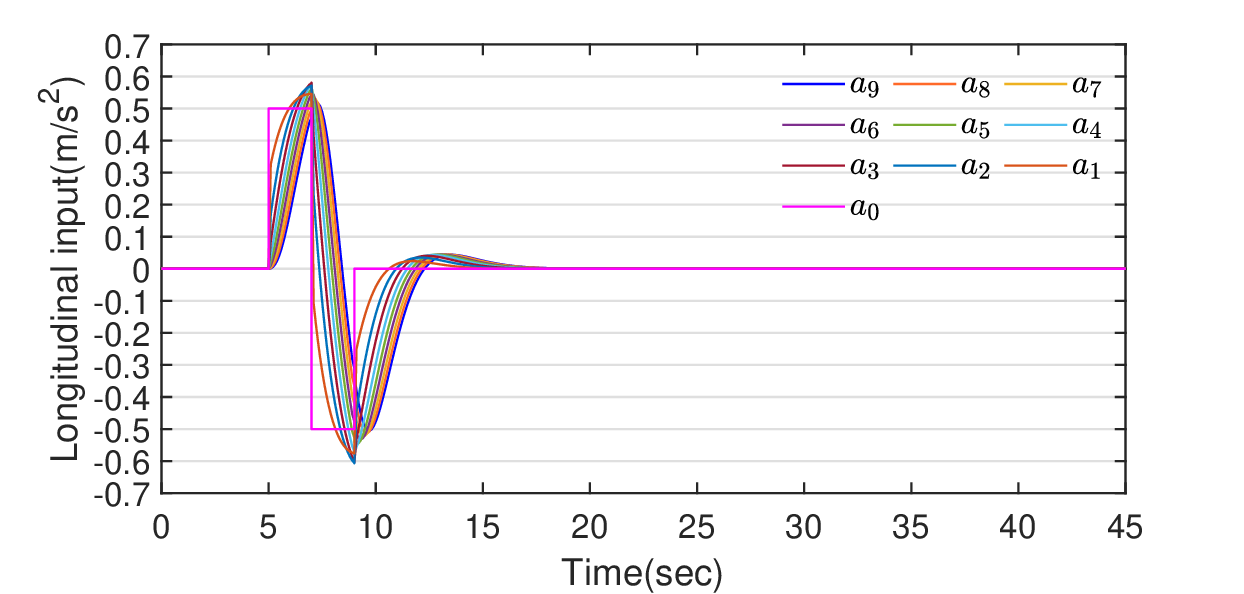}}
   % \hspace{10pt}%
    \subfigure[Communication delay $t_d=0.1s$.]{%
    \label{figure5b}%
    \includegraphics[width=3.2in]{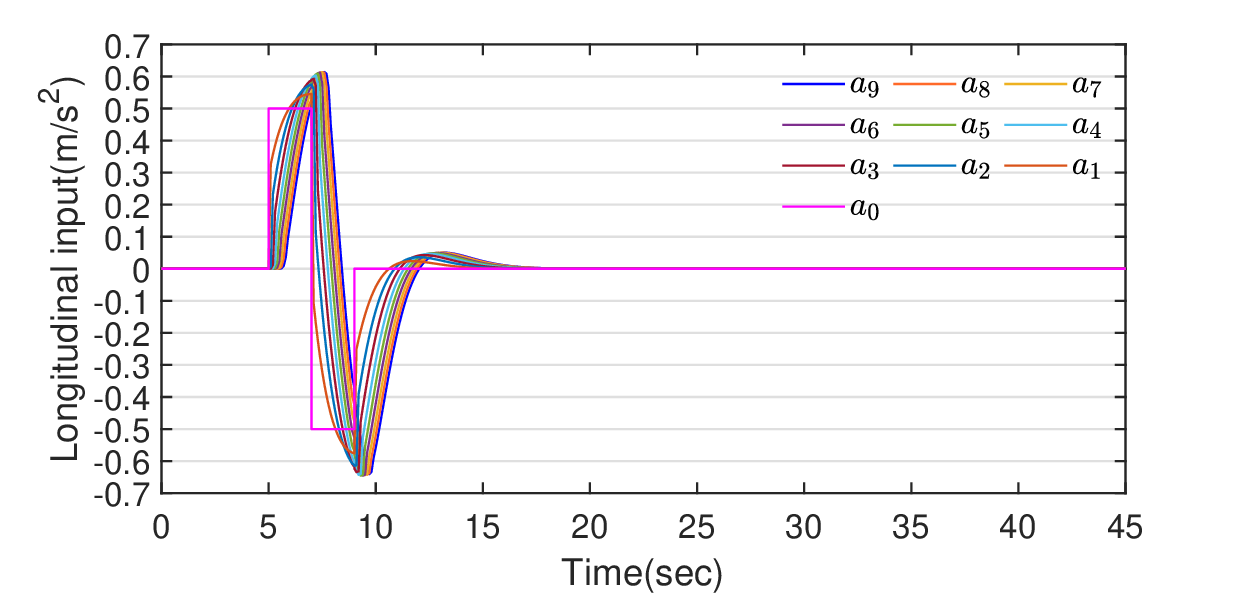}}
    \caption{\textcolor{black}{Longitudinal control: comparison of control input without communication delay and with delay $t_d=0.1$s for the proposed cooperative method. It can be seen that the proposed method is robust to such delay as there is only a slight increase in the peak of the control input. }}%
    \label{acce}%
\end{figure}

\begin{IEEEproof}
Using a Pade approximation technique as in \textit{Theorem \ref{thm2}}, the lateral error dynamics in frequency domain are
\begin{equation}\label{LL:52}
\begin{aligned}
\frac{{{{\hat e}_{i,3}}\left( s \right)}}{{{{\hat \omega }_{i - 1}}\left( s \right)}} &= \frac{{\! -\! \left( {s \!+\! \frac{{{v^*}}}{{{R^*}}}} \right)\mu \left( {2 \!-\! {t_d}s} \right) \!+\! \left( {\frac{{{v^*}}}{{{R^*}}} \!-\! {k_4}} \right)\left( {2 \!+\! {t_d}s} \right)}}{{\left( {{s^2} \!+\! \left( {{k_3} \!+\! {k_4}} \right)s \!+\! \frac{{{v^*}}}{{{R^*}}}{k_3}} \right)\left( {2 \!+\! {t_d}s} \right)}},\\
\frac{{{{\hat e}_{i,4}}\left( s \right)}}{{{{\hat \omega }_{i - 1}}\left( s \right)}} &= \frac{{\left( {2 \!+\! {t_d}s \!-\! \mu \left( {2 \!-\! {t_d}s} \right)} \right)s \!+\! {k_3}\left( {2 \!+\! {t_d}s} \right)}}{{\left( {{s^2} \!+\! \left( {{k_3} \!+\! {k_4}} \right)s \!+\! \frac{{{v^*}}}{{{R^*}}}{k_3}} \right)\left( {2 \!+\! {t_d}s} \right)}},
\end{aligned}
\end{equation}
and, for the preceding vehicle, % is calculated as
\begin{equation}\label{LL:53}
\begin{aligned}
\frac{{{{\hat e}_{i - 1,3}}\left( s \right)}}{{{{\hat \omega }_{i - 1}}\left( s \right)}}\! &=\! \frac{{\left( {\frac{{{v^*}}}{{{R^*}}}\frac{1}{{\mu s}} \!-\! \frac{{{k_4}}}{{\mu s}}} \right)\left( {2 \!+\! {t_d}s} \right) \!-\! \left( {\frac{{{v^*}}}{{{R^*}}}\frac{1}{s} \!+\! 1} \right)\left( {2\! - \!{t_d}s} \right)}}{{s\left( {2 - {t_d}s} \right) + \left( {\frac{{{k_4}}}{\mu } + \frac{{{k_3}}}{{\mu s}}\frac{{{v^*}}}{{{R^*}}}} \right)\left( {2 + {t_d}s} \right)}},\\
\frac{{{{\hat e}_{i - 1,4}}\left( s \right)}}{{{{\hat \omega }_{i - 1}}\left( s \right)}}\! &=\! \frac{{\left( {\frac{{{k_3}}}{{\mu s}} + \frac{1}{\mu }} \right)\left( {2 + {t_d}s} \right) - \left( {2 - {t_d}s} \right)}}{{s\left( {2 - {t_d}s} \right) + \left( {\frac{{{k_4}}}{\mu } + \frac{{{k_3}}}{{\mu s}}\frac{{{v^*}}}{{{R^*}}}} \right)\left( {2 + {t_d}s} \right)}}.
\end{aligned}
\end{equation}
Hence, the complementary sensitivity function in \eqref{LL:09} can be obtained based on \eqref{LL:52} and \eqref{LL:53} as
\begin{equation}\label{LL:54}
\begin{aligned}
\left|G_{i-1,i}(s)\right| &=\left| {\frac{{{{\hat e}_{i,3}}\left( s \right)}}{{{{\hat e}_{i - 1,3}}\left( s \right)}}} \right| = \left| {\frac{{{{\hat e}_{i,4}}\left( s \right)}}{{{{\hat e}_{i - 1,4}}\left( s \right)}}} \right| \\
% &= \left| {\mu s\frac{{s\left( {2 \!-\! {t_d}s} \right) \!+\! \left( {\frac{{{k_4}}}{\mu }\! +\! \frac{{{k_3}}}{{\mu s}}\frac{{{v^*}}}{{{R^*}}}} \right)\left( {2\! +\! {t_d}s} \right)}}{{\left( {{s^2} \!+\! \left( {{k_3} \!+\! {k_4}} \right)s \!+\! \frac{{{v^*}}}{{{R^*}}}{k_3}} \right)\left( {2 \!+\! {t_d}s} \right)}}} \right|,\\
&=\left| {\frac{{{f_0}{s^3} + {f_1}{s^2} + {f_2}s + {f_3}}}{{{g_0}{s^3} + {g_1}{s^2} + {g_2}s + {g_3}}}} \right|.
\end{aligned}
\end{equation}
with coefficients as
$f_0=- {t_d}\mu $,
$f_1=2\mu + {t_d}{k_4}$,
$f_2=2{k_4}+ {k_3}\frac{v^*}{R^*}{t_d}$,
$f_3=2{k_3}\frac{v^*}{R^*}$; and
$g_0={t_d}$,
$g_1=2+ {t_d}( {k_3} + {k_4} )$,
$g_2=2( {k_3} + {k_4})+ {k_3}\frac{v^*}{R^*}{t_d} $,
$g_3=2{k_3}\frac{v^*}{R^*}$.
Sufficient condition for lateral SFSS are obtained when the coefficients of the sixth-order polynomial in $\omega$ are negative, which is
\begin{equation}\label{LL:55}
\begin{aligned}
&t_d^2\left( {{\mu ^2} - 1} \right) \le 0,\\
&t_d^2{k_3}\frac{{{v^*}}}{{{R^*}}}\left( {\mu  + 1} \right) + {t_d}\left( {4\mu {k_4} - {k_3}{k_4}} \right) + 2\left( {{\mu ^2} - 1} \right) \le 0,\\
&- 8{k_3}\frac{{{v^*}}}{{{R^*}}}\left( {\mu \! -\! 1} \right) \le 4k_3^2 + 8{k_4}{k_3}.
\end{aligned}
\end{equation}
These give the conditions in \eqref{LL:51}. The proof is completed.
\end{IEEEproof}
%\begin{table}[b]
%\begin{center}
%\setlength{\abovecaptionskip}{0cm}
%\caption{Control parameters in longitudinal controller \eqref{LL:11}.} \label{table1}
%\scalebox{0.6}[0.6]{\includegraphics{tabledelay.pdf}}
%\vspace{-10pt}
%\end{center}
%\end{table}
%
%\begin{table}[b]
%\begin{center}
%\setlength{\abovecaptionskip}{0cm}
%\caption{Control parameters in literature \cite{18Khatir}.} \label{table2}
%\scalebox{0.6}[0.6]{\includegraphics{f.pdf}}
%\vspace{-10pt}
%\end{center}
%\end{table}

\begin{figure}[tp]
    \centering
    \setlength{\abovecaptionskip}{-0.1cm}
    \subfigure[Following angle error $e_{i,3}$ without communication delay.]{%
    \label{figure7a}%
     \includegraphics[width=3.2in]{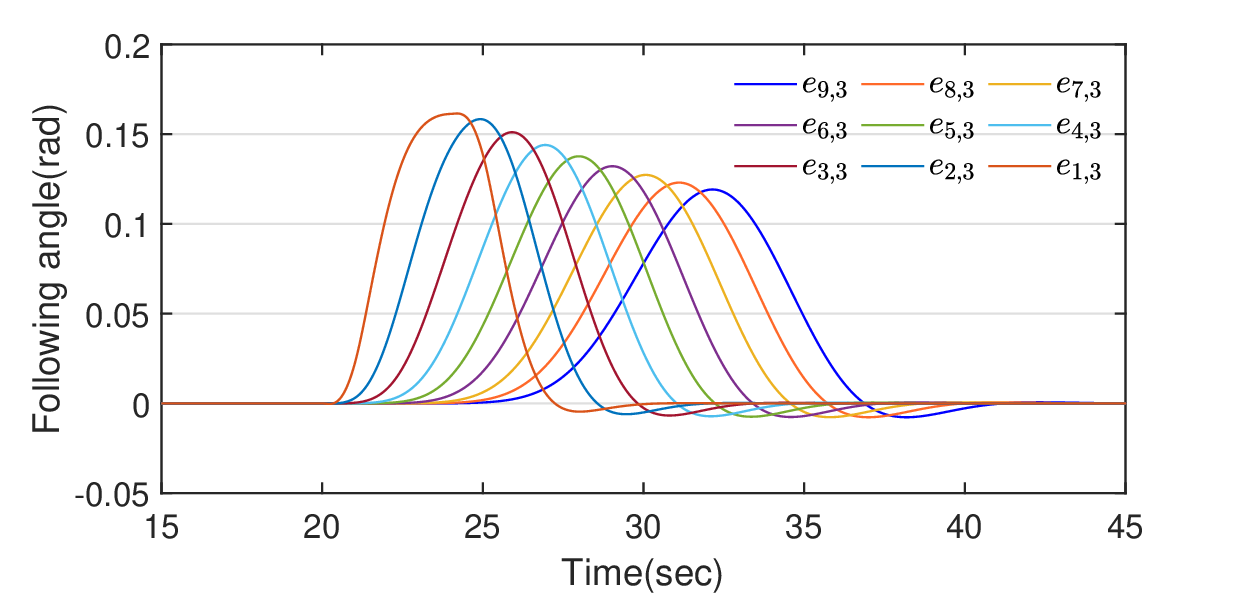}}
    \hspace{10pt}%
    \subfigure[Following angle error $e_{i,3}$ with delay $t_d=0.1s$.]{%
    \label{figure10a}%
     \includegraphics[width=3.2in]{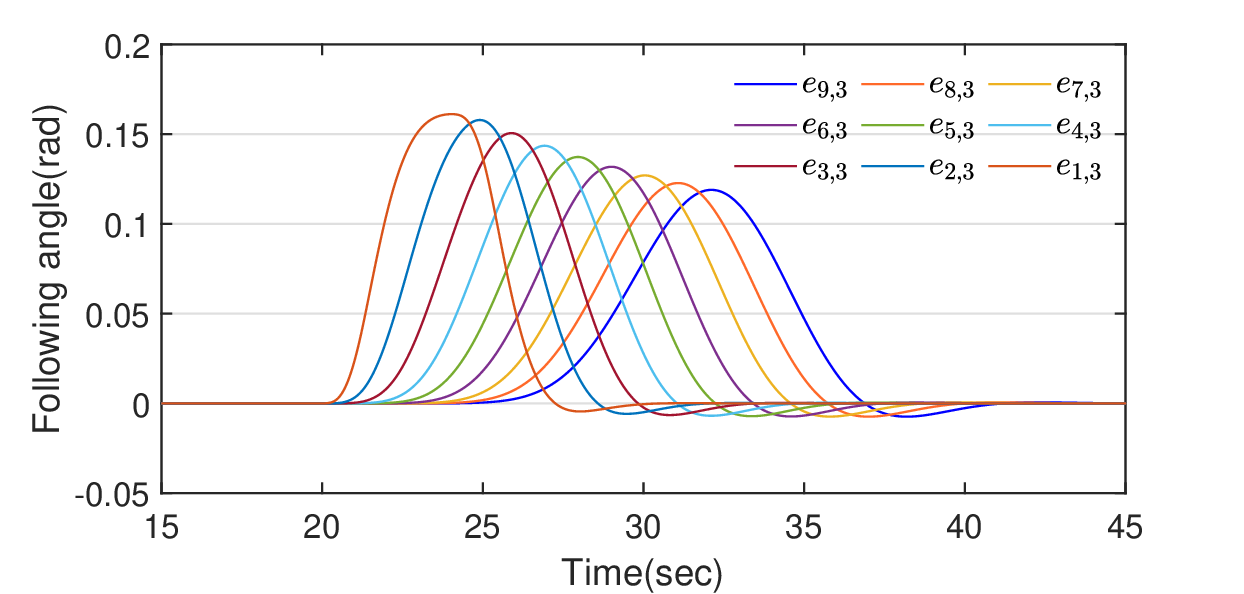}}
    \hspace{10pt}%
    \subfigure[Orientation angle error $e_{i,4}$ without communication delay.]{%
    \label{figure7b}%
    \includegraphics[width=3.2in]{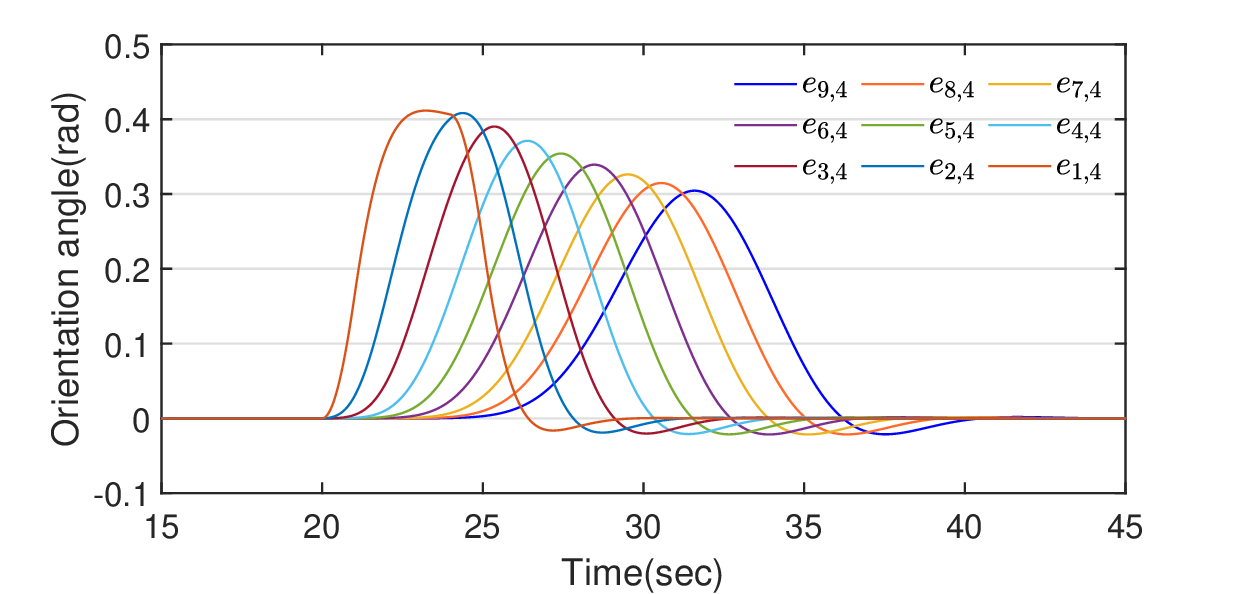}}
    \subfigure[Orientation angle error $e_{i,4}$ with delay $t_d=0.1s$.]{%
    \label{figure10b}%
    \includegraphics[width=3.2in]{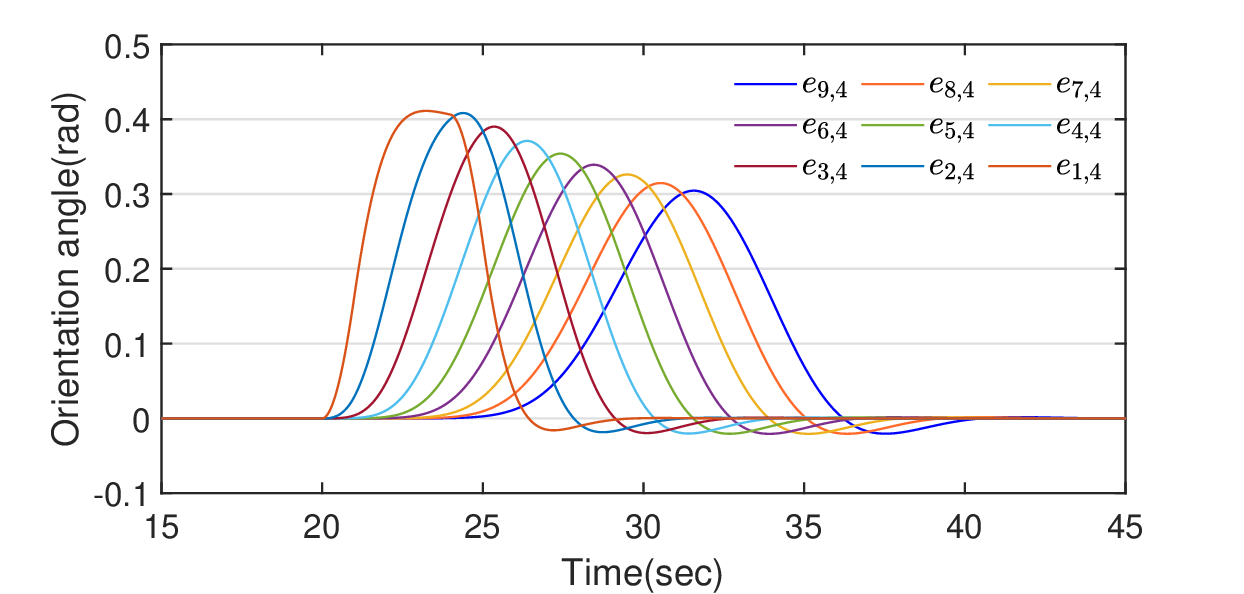}}
    \caption{\textcolor{black}{Lateral control: comparison of lateral errors without communication delay and with delay $t_d=0.1$s for the proposed cooperative method. It can be seen that the proposed method is robust to such delay as there is no noticeable effect on the error signals.}}%
    \label{e3e4}%
\end{figure}

\begin{figure}[tp]
    \centering
    \setlength{\abovecaptionskip}{-0.1cm}
    \subfigure[No communication delay.]{%
    \label{figure8a}%
    \includegraphics[width=3.2in]{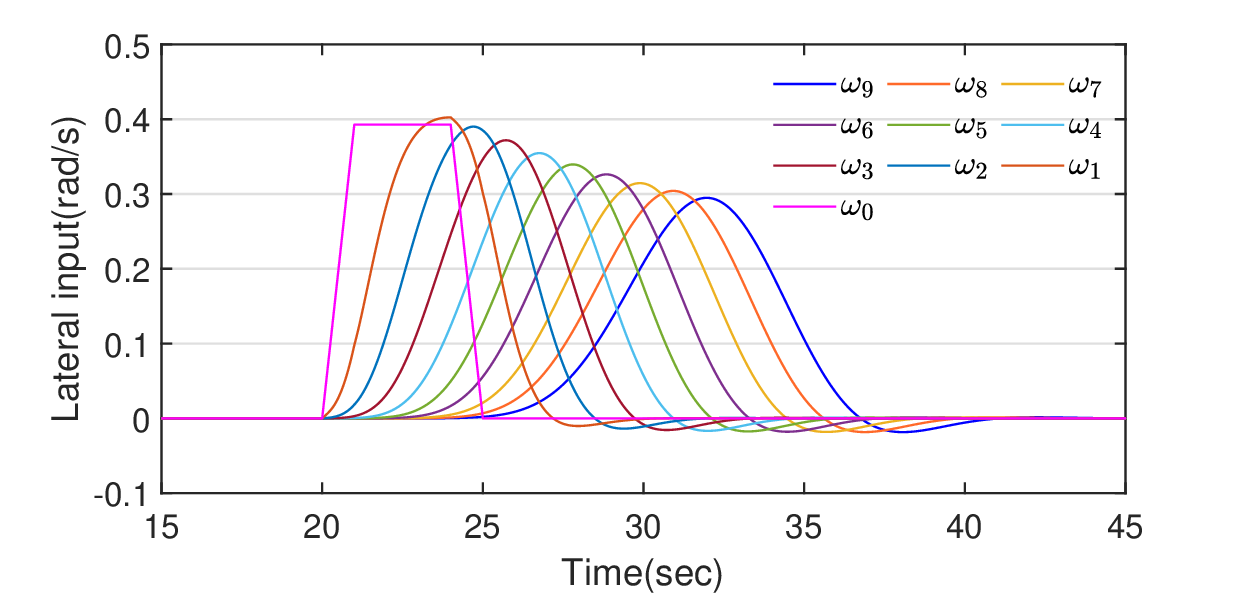}}
   % \hspace{10pt}%
    \subfigure[Communication delay $t_d=0.1s$.]{%
    \label{figure8b}%
    \includegraphics[width=3.2in]{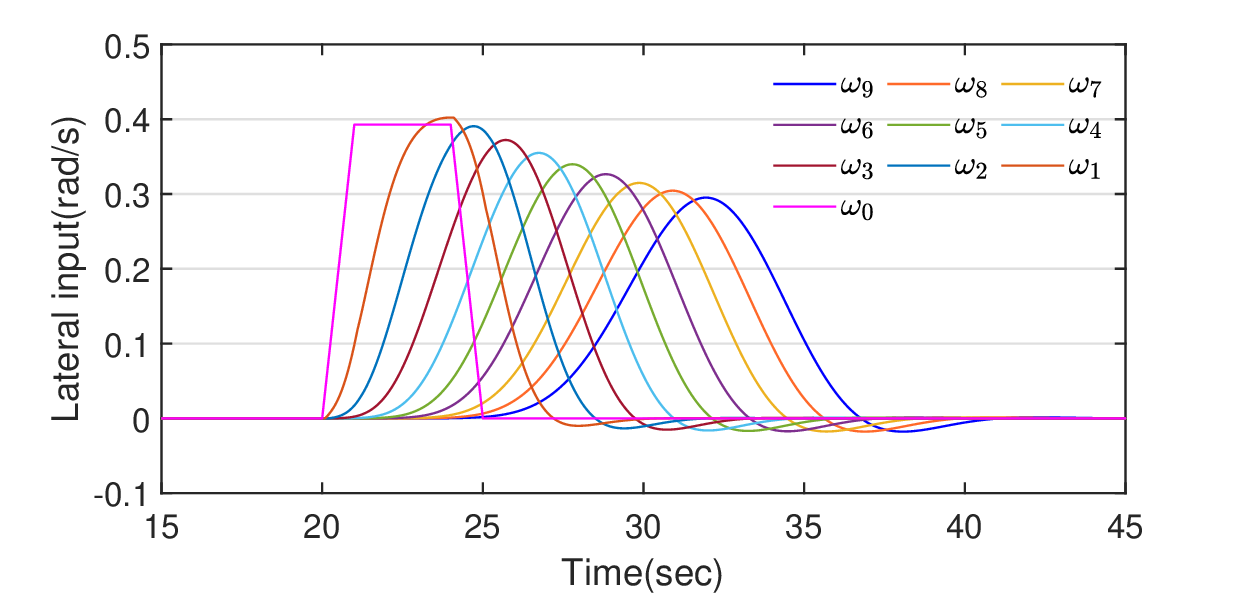}}
    \caption{\textcolor{black}{Lateral control: comparison of lateral control inputs without communication delay and with delay $t_d=0.1$s for the proposed cooperative method. It can be seen that the proposed method is robust to such delay as there is no noticeable effect on the input signals. }}%
    \label{omega}%
\end{figure}

\section{Numerical validation}\label{sec:05}
Let us consider a platoon with one leading and \textcolor{black}{nine} following vehicles,  with equilibria $v^*=10 {m \mathord{\left/ {\vphantom {m {{s^2}}}} \right.\kern-\nulldelimiterspace} {{s}}}$ and $R^*=10 m$. The leader behavior is as follows:
 \begin{equation*}
\textcolor{black}{u_0(t)= \left\{ \begin{aligned}
		&0,\ \ \ \ \ \ \ \ 0s \le t < 5s,\\
	    &0.5,\ \ \ \ \ \  5s \le t \le 7s,\\
		&\!-0.5,\ \ \  7s \le t \le 9s,\\
		&0, \ \ \ \ \ \ \ \ \ \ \ \ \ \ \  t > 9s.
	\end{aligned}
	\right.}
\end{equation*}
For the lateral maneuver, we simulate the vehicle platoon to turn by $90^\circ$ similar to \cite{18Khatir}. The leader behavior is as follows:
\begin{equation*}
\omega_0(t)= \left\{ \begin{aligned}
		&0,\ \ \ \ \ \ \ \ \ \ \ \qquad \qquad 0s \le t < 20s,\\
		&\frac{\pi }{8}(t-20),\ \ \ \ \ \ \qquad 20s \le t \le 21s,\\
		&\frac{\pi }{8},\ \ \ \ \ \ \ \ \ \ \qquad \qquad21s \le t \le 24s,\\
		&\frac{\pi }{8}-\frac{\pi }{8}(t-24),\ \ \ \ \ \ 24s \le t \le 25s,\\
		&0, \ \ \ \ \ \ \ \ \ \ \ \qquad \qquad t > 25s.
	\end{aligned}
	\right.
\end{equation*}

\subsection{Results of proposed cooperative control and comparisons with state-of-the-art decentralized control}\label{sec:05a}
\vspace{-5pt}
For the proposed longitudinal controller, we design the control gains in \eqref{LL:11} as $\alpha_i-\alpha_{i-1}=-0.1$ with $\alpha_1=1$, $\gamma_i=\gamma_{i-1}=0.5$, $i=2, \dots, 9$; $\beta_i$ calculated from \eqref{LL:18} according to the former two parameters with $\beta_1=1.6$. \textcolor{black}{The resulting non-identical control gains are
\begin{equation}
\begin{aligned} \nonumber
\alpha_i&=\{1,0.9,0.8,0.7,0.6,0.5,0.4,0.3,0.2\},\\
\beta_i&=\{1.6,2,2.4,2.8,3.2,3.4,3.6,3.8,4\}.
\end{aligned}
\end{equation}}Under the proposed controller \eqref{LL:11} with the above parameters, the evolution of relative distance $R_i$ and absolute velocity $v_i$ are shown in Fig. \ref{figure3b} and Fig. \ref{figure4b}. For comparisons with the state of the art, we reproduce the same conditions using the decentralized control in \cite{18Khatir} with control gains as in \eqref{23}.

\begin{figure}[tp]
    \centering
    \setlength{\abovecaptionskip}{-0.1cm}
    \subfigure[\textcolor{black}{Longitudinal control input.}]{%
    \label{figure18a}%
    \includegraphics[width=3.2in]{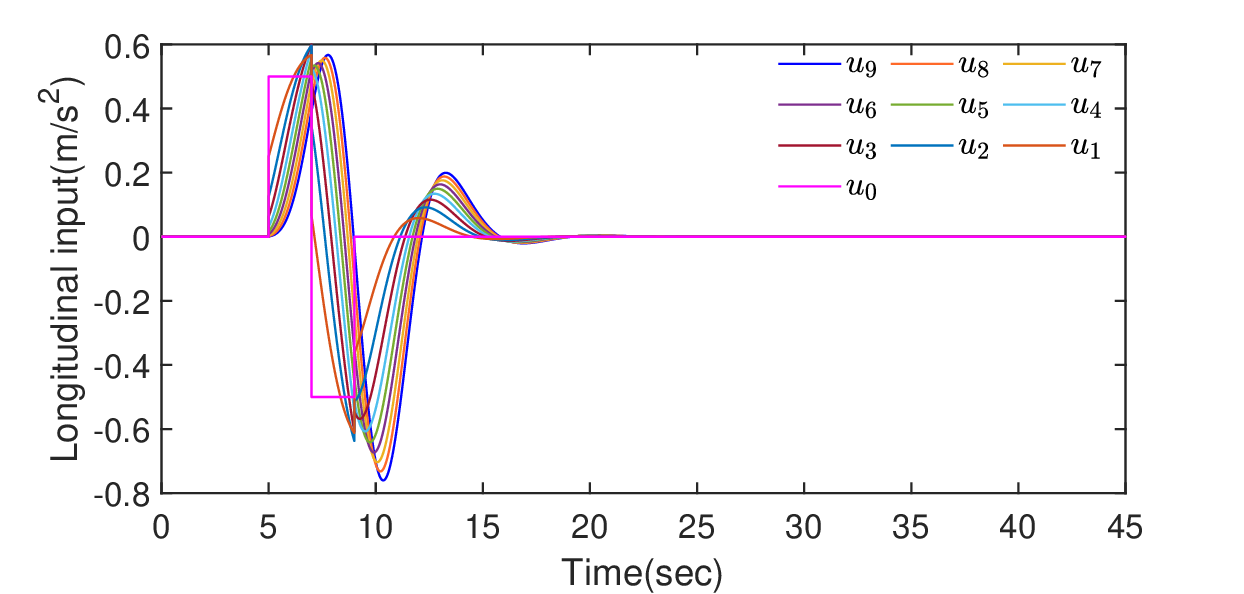}}
    \hspace{10pt}%
%    \subfigure[\textcolor{black}{Longitudinal acceleration.}]{%
%    \label{figure18b}%
    %\includegraphics[width=3in]{thirda.eps}}
    \subfigure[\textcolor{black}{Distance error $e_{i,1}$.}]{%
    \label{figure18c}%
    \includegraphics[width=3.2in]{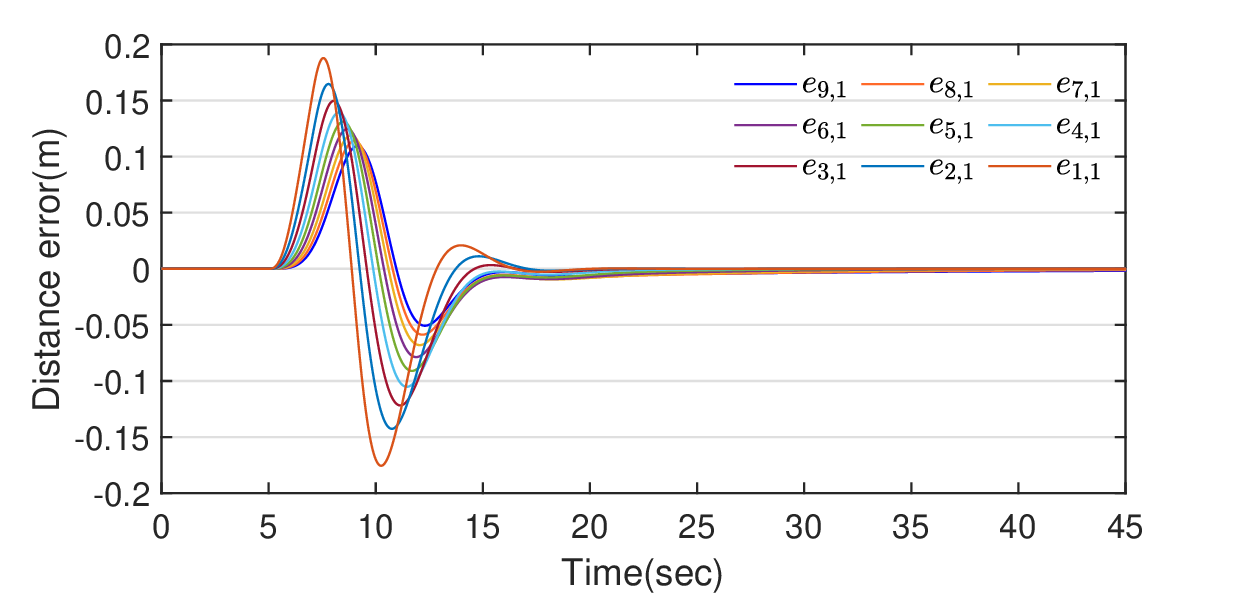}}
    \subfigure[\textcolor{black}{Velocity error $e_{i,2}$.}]{%
    \label{figure18d}%
    \includegraphics[width=3.2in]{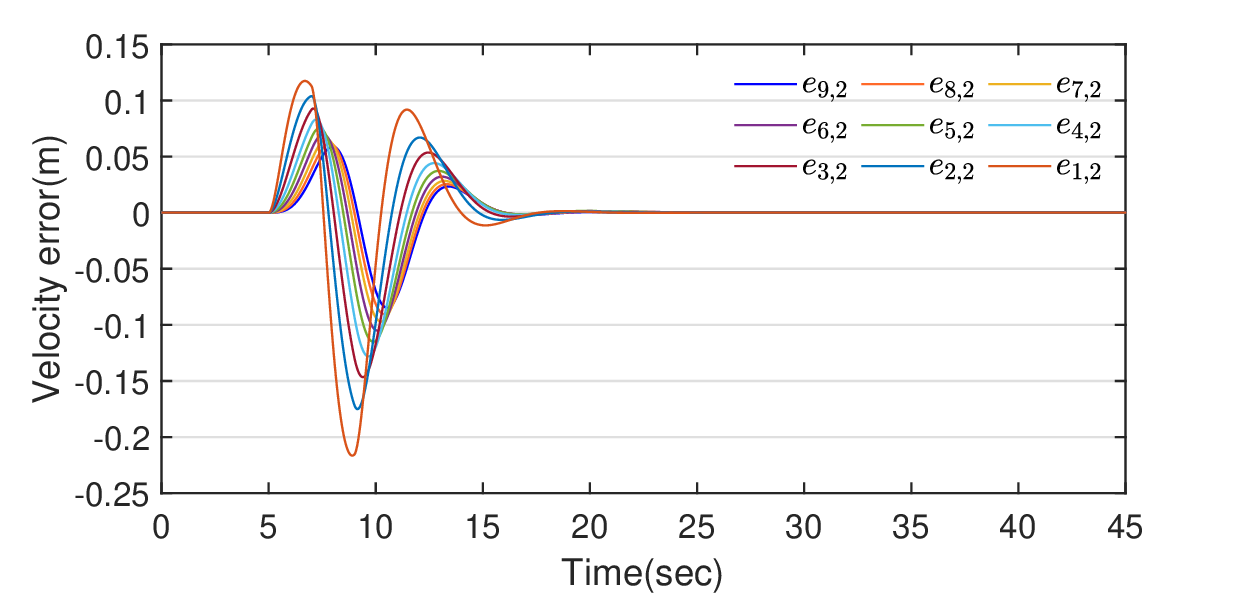}}
    \caption{\textcolor{black}{Longitudinal control with actuation time lag $\tau=0.5$s for the proposed cooperative method. It can be seen that the proposed method is robust to such time lag as there is string stability properties occur similar to Figs. \ref{e1e2} and \ref{acce}. }}%
    \label{third}%
\end{figure}

%\begin{figure}[tp]
%  \centering
%  \setlength{\abovecaptionskip}{-0.1cm}
%  \includegraphics[width=2.6in]{newlateralinput0729.eps}\\
%  \caption{Lateral input $\omega_i$ for the vehicle platoons.}\label{figure8}
%\end{figure}

%\begin{figure}[tp]
%  \centering
%  \setlength{\abovecaptionskip}{-0.1cm}
%  \includegraphics[width=2.8in]{acce.eps}
%  \caption{Longitudinal input $a_i$ for the vehicle platoons.}\label{figure5}
%\end{figure}

%\begin{figure}[tp]
%  \centering
%  \setlength{\abovecaptionskip}{-0.1cm}
%  \includegraphics[width=2.6in]{delaya.eps}
%  \caption{Longitudinal input $a_i$  with time-delay $t_d=0.1s$.}\label{figure10}
%\end{figure}

%\begin{figure}[tp]
%  \centering
%  \setlength{\abovecaptionskip}{-0.1cm}
%  \includegraphics[width=2.45in]{delayv.eps}
%  \caption{Longitudinal velocity $v_i$  with time-delay $t_d=0.1s$.}\label{figure11}
%\end{figure}

%\begin{figure}[tp]
%    \centering
%    \setlength{\abovecaptionskip}{-0.1cm}
%    \subfigure[Following angle $e_{i,3}$ with time-delay $t_d=0.1s$.]{%
%    \label{figure10a}%
%    %\includegraphics[width=2.5in]{delaye3.eps}}
%     \includegraphics[width=2.5in]{delaye310.eps}}
%    \hspace{10pt}%
%    \subfigure[Orientation angle $e_{i,4}$ with time-delay $t_d=0.1s$.]{%
%    \label{figure10b}%
%    %\includegraphics[width=2.5in]{delaye4.eps}}
%    \includegraphics[width=2.5in]{delaye410.eps}}
%    \caption{Lateral errors for the vehicle platoons with time-delay.}%
%    \label{delaye3e4}%
%\end{figure}

The comparative results are arranged in Fig. \ref{figure3a} and Fig. \ref{figure4a}. It is clear that the advantages of the proposed cooperative implementation are: faster convergence, less fluctuations and stronger string stability performance. To validate the string stability properties of the proposed controller, the relative distance error $e_{i,1}$ and relative velocity error $e_{i,2}$ are reported in Figs. \ref{figure6a}, \ref{figure6b}, from which the SFSS property is verified as the error peak decreases with the vehicle index. The evolution of the control input is shown in Fig. \ref{figure5a}.

For the proposed lateral controller \eqref{LL:29}, the parameters can be identical for all vehicles and are chosen as $k_3=2$, $k_4=0.1$ and $\mu=0.1$. \textcolor{black}{Stability and string stability are verified by looking at the following angle error $e_{i,3}$ in Fig. \ref{figure7a} and the orientation error $e_{i,4}$ in Fig. \ref{figure7b}: both errors  approach to zero, with the error peak decreasing with the vehicle index.} The lateral control input $\omega_i$ is depicted in Fig. \ref{figure8a}.

\subsection{Validation of robustness to delays}\label{sec:05b}
%In this part, considering the practical implementation, we
We now introduce a communication delay $t_d=0.1s$ in the longitudinal and lateral  terms $a_{i-1}\left( {t - {t_d}} \right)$ and $\omega_{i-1}\left( {t - {t_d}} \right)$. For the time-delayed longitudinal control, one can verify that all conditions in \eqref{LL:34} hold using the same control gains in Sect. \ref{sec:05a}. Therefore, these gains are robust to the communication delay $t_d=0.1s$.  \textcolor{black}{This is verified by the evolution of the longitudinal errors in Figs. \ref{figure9a}, \ref{figure9b}, by the evolution of the lateral errors in Figs. \ref{figure10a}, \ref{figure10b}, and by} the evolution of longitudinal control input in Fig. \ref{figure5b}. \textcolor{black}{The results have been organized so as to compare the no-delay scenario with the delayed scenario. Remarkably, the performance of the delayed scenario is nearly the same as the no-delay scenario. For longitudinal control, by comparing Figs. \ref{figure6a}, \ref{figure6b} (without delay) with Figs. \ref{figure9a}, \ref{figure9b} (with delay), there is an almost unnoticeable increase in the signal peaks, and it is evident that SFSS is maintained even with communication delay. Similar robustness considerations hold for lateral control.}

\textcolor{black}{To further verify robustness to actuation time lags, we perform the same scenario for the third-order dynamics \eqref{LL:24} with lag $\tau=0.5s$. The control gains are designed with the same $\alpha_i,\beta_i$ and $\gamma_i$ in Sect. \ref{sec:05a}, and $\lambda_i=0.5$ which satisfy the conditions in \eqref{new3}. Fig. \ref{figure18a} gives the control input of the vehicle platoon. Figs. \ref{figure18c}, \ref{figure18d} show that SFSS is guaranteed as distance and velocity errors both exhibit peak attenuation. Therefore, robustness to actuation lag is validated.}

\begin{figure}[tp]
  \centering
  \setlength{\abovecaptionskip}{-0.1cm}
  \includegraphics[width=1.8in]{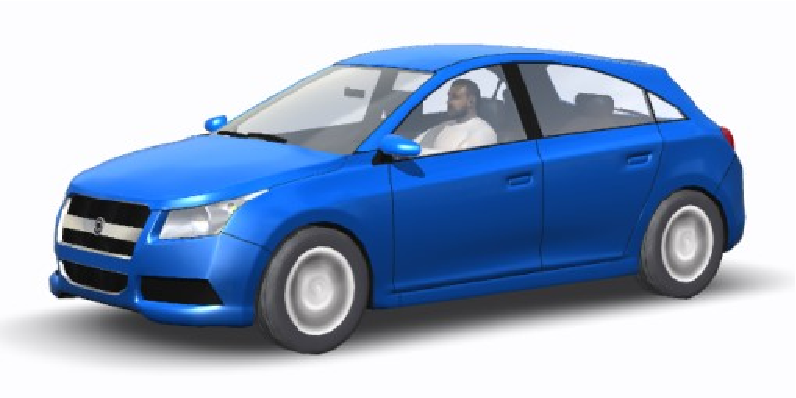}
  \caption{C-Class hatchback model in CarSim.}\label{figurecar}
\end{figure}
\begin{figure}[tp]
  \centering
  \setlength{\abovecaptionskip}{-0.1cm}
  \includegraphics[width=3.4in]{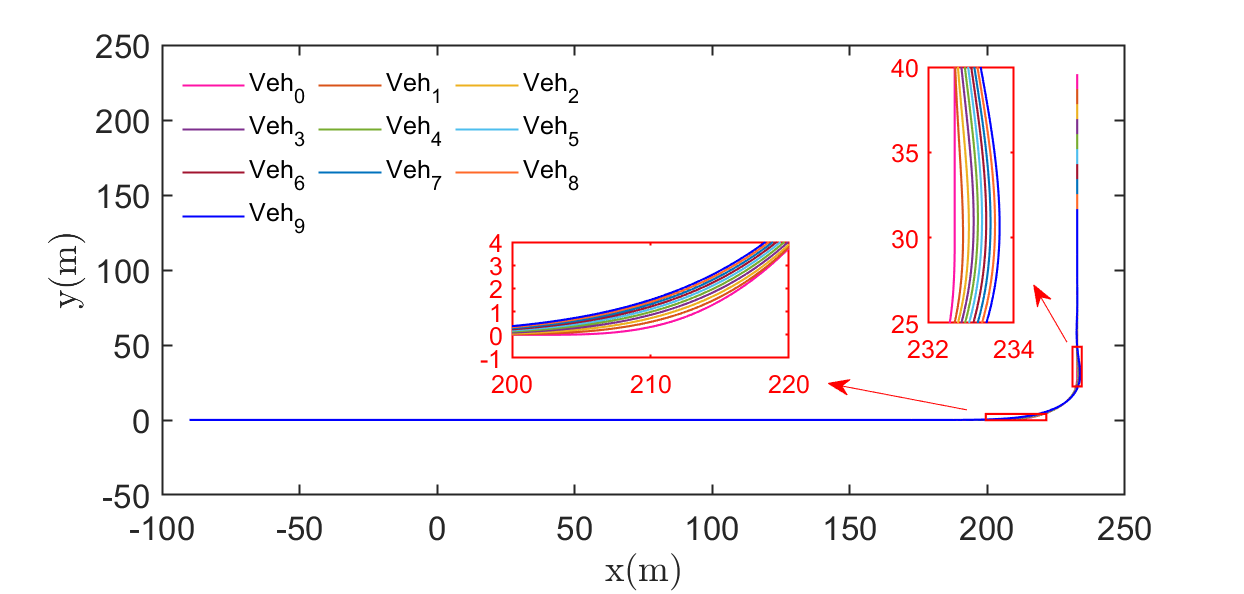}
  \caption{\textcolor{black}{CarSim results: Trajectories of the vehicle platoon in the $x$-$y$ plane. The driving scenario comprises a straight road followed by a $90^o $ turn.}}\label{figure13}
\end{figure}
\begin{figure}[tp]
  \centering
  \setlength{\abovecaptionskip}{0.1cm}
  \includegraphics[width=3.4in]{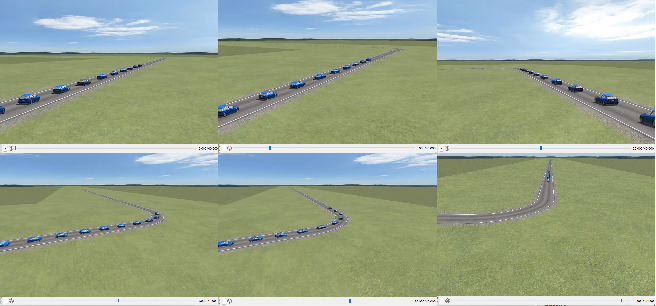}
  \caption{\textcolor{black}{CarSim results: Snapshots of the vehicle platoon at times $t = 0s; 8s; 20s; 23 s; 26s; 38s$. It can be seen that the platoon keeps the formation in the straight road and then turns while keeping the formation.}}\label{figure12}
\end{figure}

\begin{figure}[tp]
  \centering
  \setlength{\abovecaptionskip}{-0.1cm}
  \includegraphics[width=3in]{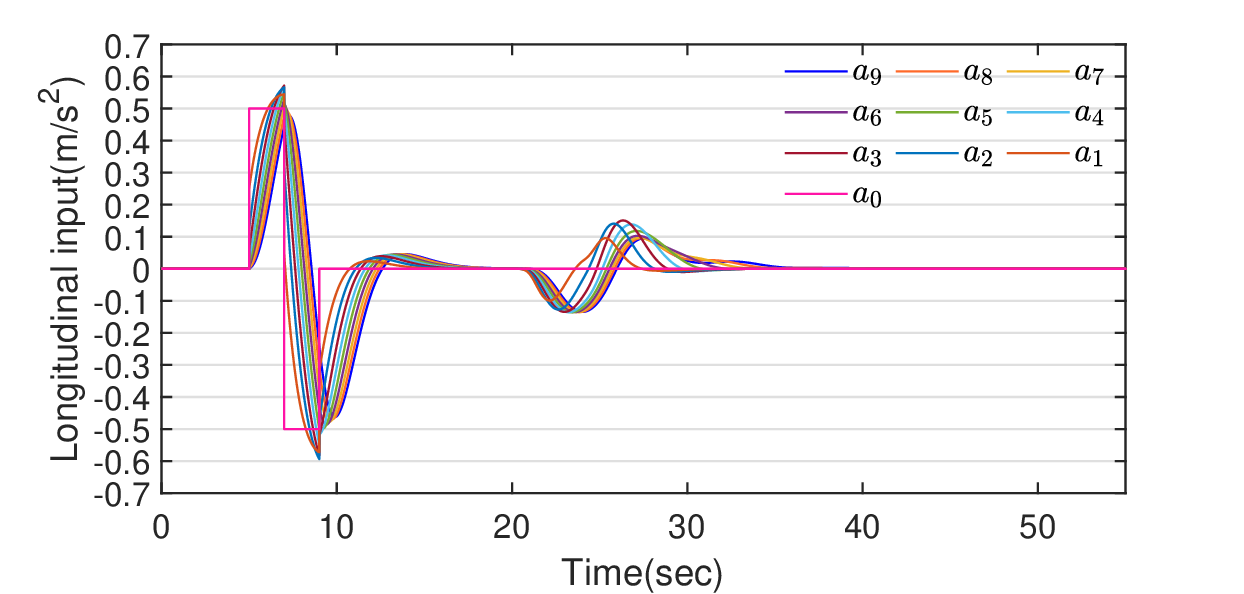}
  \caption{\textcolor{black}{CarSim results: Longitudinal input control input. The turning phase begins at around 20s, which requires the vehicles to slightly decelerate and then accelerate again.}}\label{figure14}
\end{figure}

\begin{figure}[tp]
  \centering
  \setlength{\abovecaptionskip}{-0.1cm}
  \includegraphics[width=3.2in]{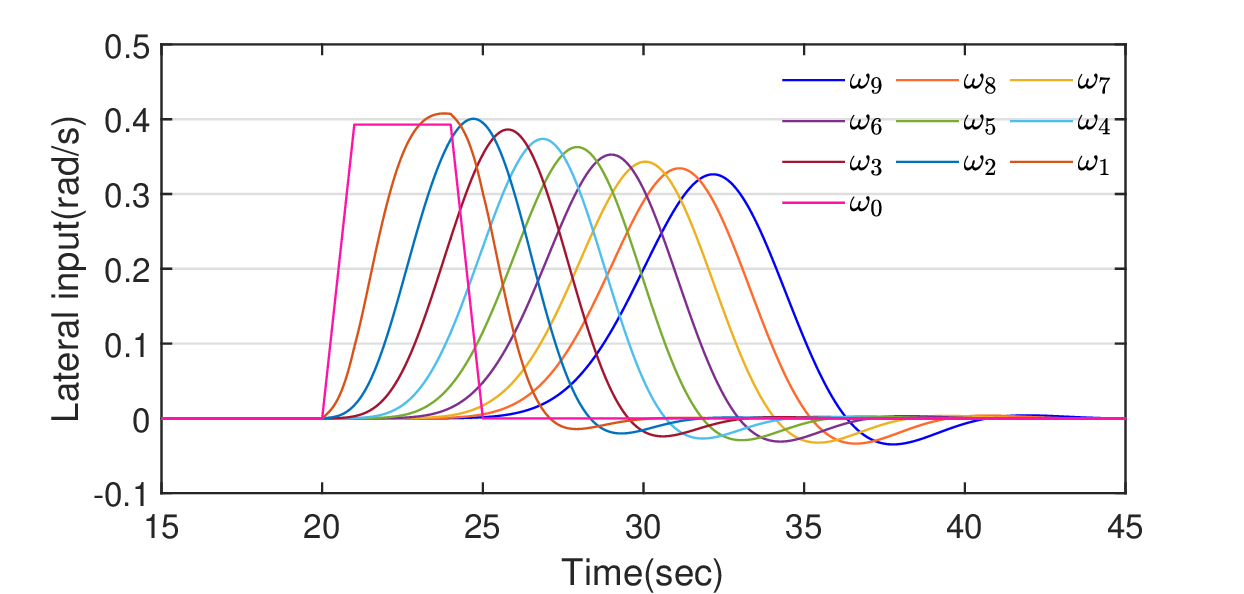}
  \caption{\textcolor{black}{CarSim results: Lateral control input.}}\label{figure15}
\end{figure}

\begin{figure}[t]
    \centering
    \setlength{\abovecaptionskip}{-0.1cm}
    \subfigure[Distance error $e_{i,1}$.]{%
    \label{figure16a}%
    \includegraphics[width=3.2in]{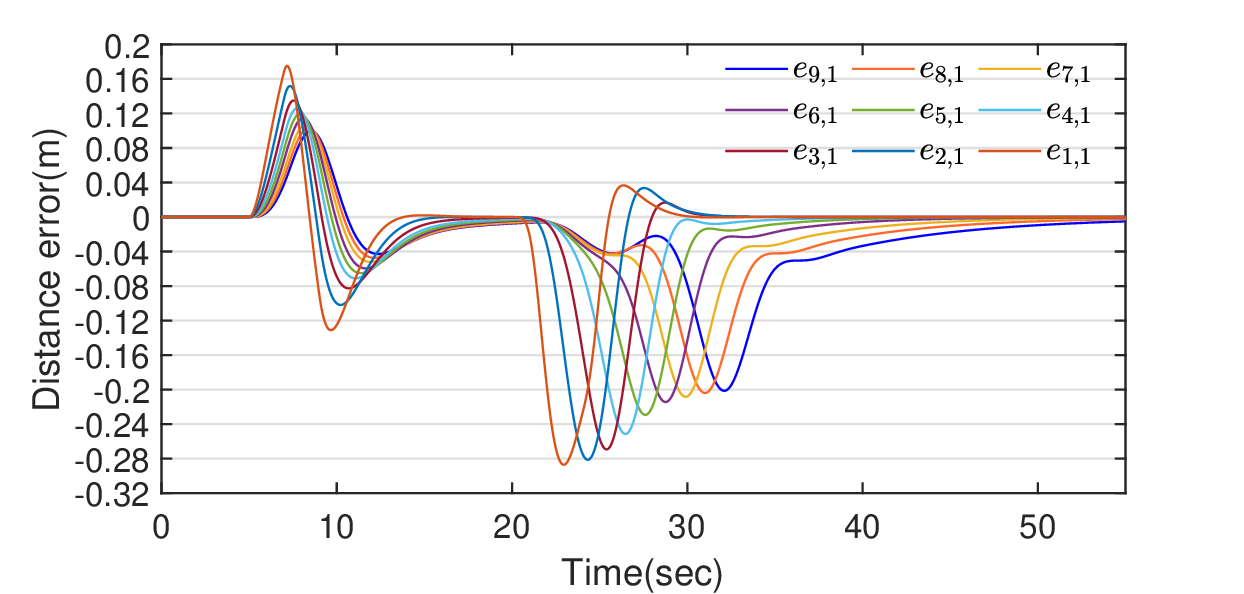}}
    \hspace{10pt}%
    \subfigure[Velocity error $e_{i,2}$.]{%
    \label{figure16b}%
    \includegraphics[width=3.2in]{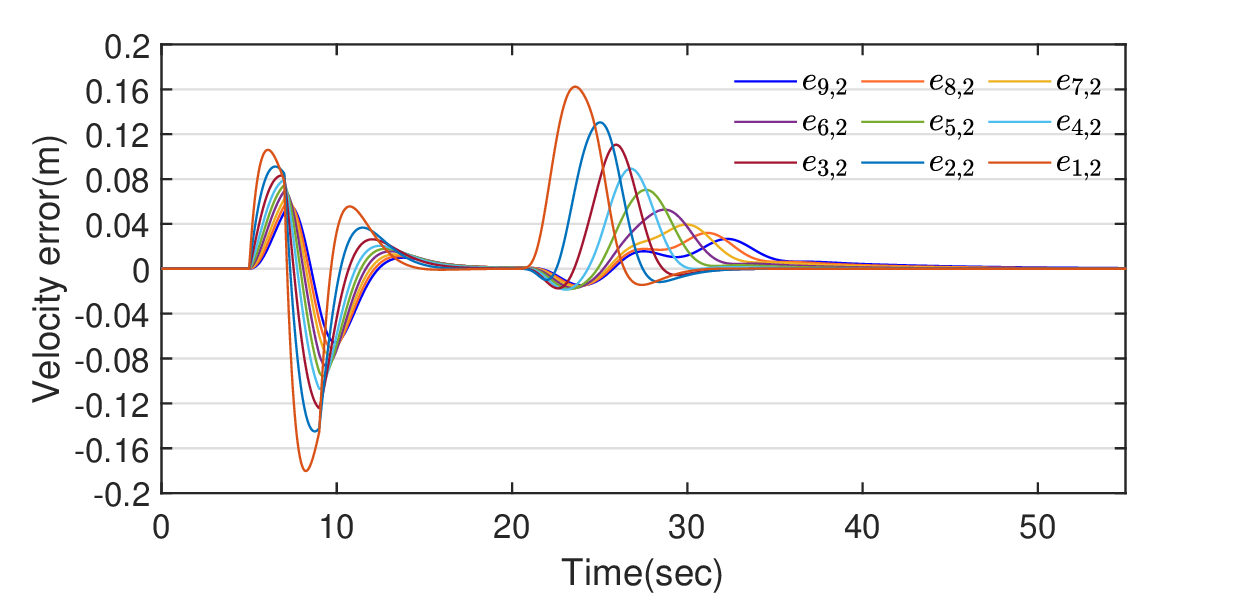}}
    \caption{\textcolor{black}{CarSim results: Longitudinal errors. Note that string stability is attained both during the straight road phase and during the turning phase. }}%
    \label{figure16}%
\end{figure}

\begin{figure}[t]
    \centering
    \setlength{\abovecaptionskip}{-0.1cm}
    \subfigure[Following angle error $e_{i,3}$.]{%
    \label{figure17a}%
    \includegraphics[width=3.2in]{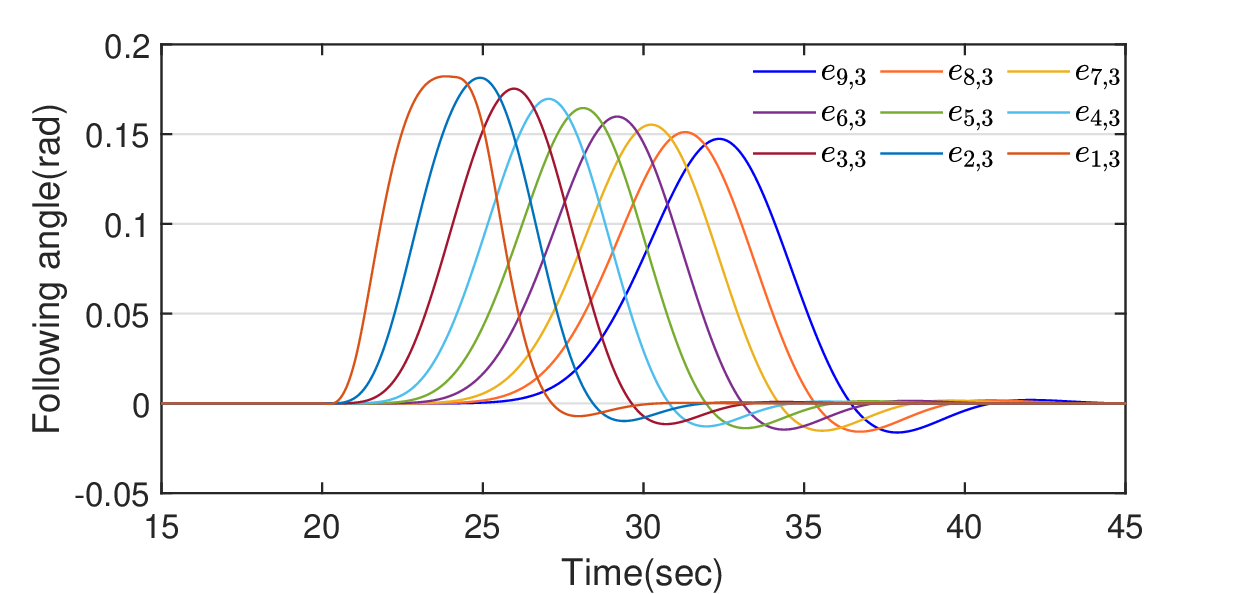}}
    \hspace{10pt}%
    \subfigure[Orientation angle error $e_{i,4}$.]{%
    \label{figure17b}%
    \includegraphics[width=3.2in]{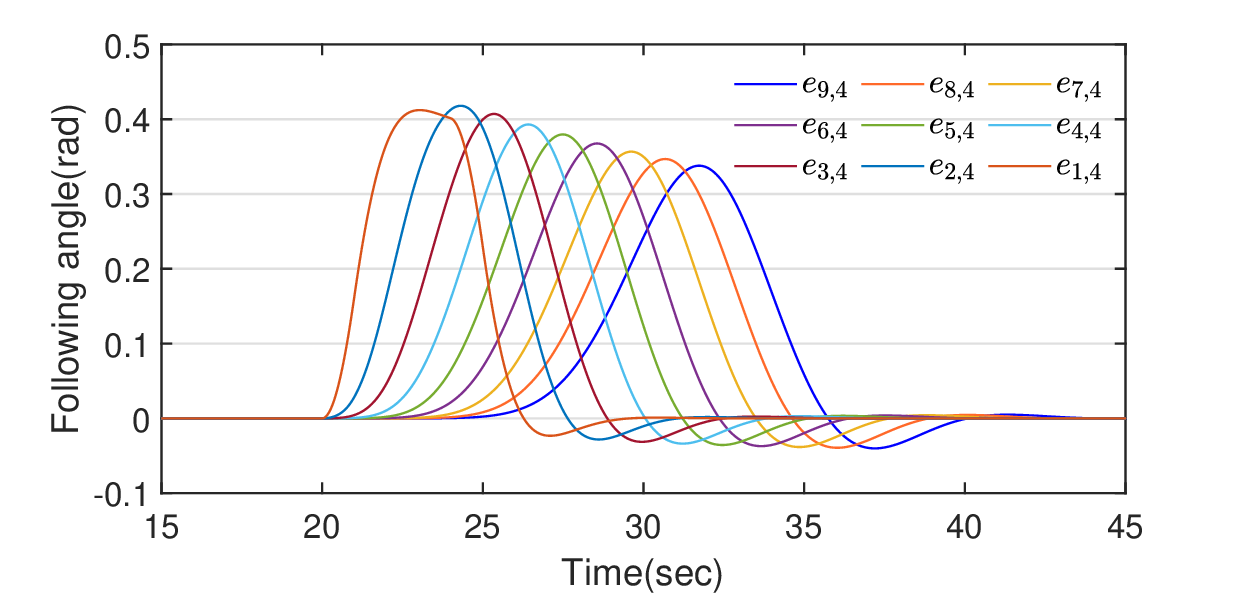}}
    \caption{\textcolor{black}{CarSim results: Lateral errors. Note that the string stability properties are analogous to those reported in Fig. \ref{e3e4}.}}%
    \label{figure17}%
\end{figure}

\subsection{Validation in CarSim mechanical simulation}
To show that the control laws \eqref{LL:11} and \eqref{LL:29} can be effective in a realistic scenario, we perform numerical tests in the software CarSim. The simulations are developed under aerodynamics of the C-Class hatchback in Fig.\ref{figurecar} whose wheelbase is $l=2.91m$ and friction coefficient is set to 0.85. CarSim can simulate engine and driveline dynamics that are not included in the control design, thus validating the methodology in a more realistic setting (engine parameters and other vehicle parameters in CarSim are the default parameters of C-class hatchback and are not reported for lack of space). To illustrate the simulation, we capture some snapshots of the platoons' motion video in Fig. \ref{figure12} at time $t=0s,7s,20s,22s,26s,35s$. The trajectory evolution in the $x$-$y$ plane is in Fig. \ref{figure13}, showing that the platoons can track the leading vehicle while turning. The longitudinal and lateral control inputs are in Fig.\ref{figure14} and Fig. \ref{figure15} respectively.
Fig. \ref{figure16a}-\ref{figure16b} shows that the relative distance and velocity errors maintain string stability while turning. % under the nonlinear coupling of the model.
Moreover, string stability is guaranteed for both lateral errors, as seen in Fig. \ref{figure17a}-\ref{figure17b}.

\ifCLASSOPTIONcaptionsoff
  \newpage
\fi

\section{Conclusions and future work}\label{sec:06}
This article considered %cooperative control for
mesh stability in vehicle platoons, i.e. ensuring string stability specifications in both longitudinal and lateral sense. Cooperative  control via vehicle-to-vehicle communication was the main feature of this work, as compared to available decentralized results only relying on on-board sensing. For longitudinal control, a non-identical cooperative controller was proposed  which extends and improves the decentralized controller in the literature by guaranteeing strong string stability. For lateral control, an identical cooperative controller was proposed which also guarantees lateral strong string stability. Robustness to \textcolor{black}{actuation time lags and} vehicle-to-vehicle communication delays have been analyzed.

Future work may involve studying the effect of parameter uncertainty on string stability. \textcolor{black}{Extending the proposed approach in a bidirectional sense so as to handle merging/splitting maneuvers is also of interest.} Moreover, while this work has adopted a frequency-domain mesh stability perspective, analyzing mesh stability based on Lyapunov theory is a promising research to address some nonlinearities. %. Nonlinearities can also include saturations due to actuator limitations, which can be potentially addressed using sector-bounded nonlinearities.
\bibliographystyle{IEEEtran}
\bibliography{sample}
\end{document}